\newtheorem{definition}{Definition}
\begin{document}

\newtheorem{theorem}{Theorem}[section]
\newtheorem{conjecture}[theorem]{Conjecture}
\newtheorem{corollary}[theorem]{Corollary}
\newtheorem{proposition}[theorem]{Proposition}

\title{Strategic Server Deployment under Uncertainty in Mobile Edge Computing}
\author{
    \IEEEauthorblockN{
        Duc A. Tran ~~~ Dung  Truong ~~~ Duy Le\\
        Department of Computer Science\\
        University of Massachusetts, Boston, MA 02125\\
        Email: \{duc.tran, dung.truong001, duy.le004\}@umb.edu
        }
}
\maketitle

\thispagestyle{plain}
\pagestyle{plain}

\begin{abstract}
Server deployment is a fundamental task in mobile edge computing: where to place the edge servers and what user cells to assign to them. To make  this decision is context-specific, but common goals are 1) computing efficiency: maximize the amount of workload processed by the edge, and 2) communication efficiency: minimize the  communication cost between the cells and their assigned servers.  We focus on practical scenarios where the user workload in each cell is unknown and time-varying, and so are the effective capacities of the servers. Our research problem is to choose a subset of candidate servers  and assign them  to the user cells such that the above goals are sustainably achieved under the above uncertainties. We formulate  this problem as a stochastic bilevel optimization, which is strongly NP-hard and unseen in the literature. By approximating the objective function with  submodular functions,  we can utilize state-of-the-art greedy algorithms for submodular maximization to effectively solve our problem. We evaluate the proposed algorithm using real-world data, showing its superiority to alternative methods; the improvement can be as high as 55\%.
\end{abstract}

\section{Introduction \label{sec:intro}}
Mobile Edge Computing (MEC) \cite{8030322} is a viable technology for mobile operators to push computing resources closer to the users by deploying edge servers. MEC server deployment consists of two tasks: server placement and server assignment. 
If we consider the placement task alone, seeking a subset of servers from given candidates to optimize some objective,  this objective is usually well-formulated and easy to evaluate. For example, as an instance of the classic Facility Location Problem \cite{Liu2009}, we select $k$ servers from $M>k$ known locations to minimize the average geographic distance between each of $N$ cells and its nearest selected server; to compute this distance for an arbitrary placement is easy. On the other hand, if we consider the assignment task alone, we would be given a set of $k$  servers in advance and seek the best way to assign them to the $N$ cells to maximize the edge benefit. For example, as an instance of the classic Multiple Knapsack Problem \cite{Kellerer2004}, the objective is to maximize the amount of workloads processable by the servers without exceeding their capacity limits.

In the above examples, the two tasks are independent optimization problems. The former assumes that a cell would always be assigned to its nearest server, whereas the latter would not use this assignment because it is not the optimal solution. In practice, we often need to solve these problems jointly, which is the focus of our paper:  select the best $k$ servers among a set of $M$ candidates to optimize an objective that can only be figured out by  solving the assignment problem for these $k$ servers to a set of $N$ cells.  We aim to achieve the following:
(1)  \underline{Computing efficiency}: The total backhaul cost should be minimal. This cost is incurred by the datacenter for processing requests not fulfilled by the edge servers due to edge capacity limit; 
(2) \underline{Communication efficiency}:  Cells should be assigned to servers such that their communication is cost-effective with low latency;
and (3) \underline{Uncertainty robustness}: The server deployment should remain efficient in the presence of time-varying workloads and the dynamics that a server's effective capacity may go up and down in real time. Otherwise we would need to recompute a new deployment frequently to cope with these uncertainties.

\textbf{Challenges. } This joint optimization problem belongs to the class of  bilevel-optimization problems \cite{7942105}: the assignment problem is the  inner optimization task and the placement problem  is the outer optimization task. Unfortunately, bilevel optimization in general is strongly NP-hard \cite{doi:10.1137/0913069} and well-established approximation methods remain lacking to date \cite{7942105}.  

What complicates  our research  is due to the third goal aforementioned: uncertainty robustness. We are aware of no prior MEC research for the following  question: can we find a server deployment solution strategically  that can sustain, without adjustment, a long period of time-varying workloads under time-varying server capacities? Real-time changes in workload and capacity are well-observed in network systems \cite{DBLP:journals/ior/IbrahimW11,Kafetzakis:2011:ESD:1988079.1988274,7362036,TranICCCN19,Modeling15-Chen}.  Combining these uncertainties with the strong NP hardness of bilevel optimization makes our research problem   uniquely challenging.

\textbf{Contributions. }
We call our problem the \underline{S}trategic server \underline{D}eployment under \underline{U}ncertainy problem ($\mathsf{SDU}$).  We make the following contributions.

1) We propose a stochastic bilevel-optimization formulation for $\mathsf{SDU}$. We represent the workload uncertainty    and capacity uncertainty in terms of random variables and set the objective to optimize a stochastic quantity incorporating all the three criteria of computing efficiency, communication efficiency, and uncertainty robustness. From the algorithmic perspective, this formulation is  a new and non-trivial instance of bilevel optimization.   

2) From the application perspective, our research would save costs for real-world implementations. In practice, we do not adjust the server deployment after each individual workload change or individual capacity change. Instead, this adjustment should only take place after a sufficiently long period. We thus enable  deployments that are more sustainable over time. 

3) As $\mathsf{SDU}$ is strongly NP-hard, whose objective function is stochastic, hence computational unfriendly, we propose an approximate algorithm to solve $\mathsf{SDU}$ with polynomial time complexity and bounded approximality. Our approach is based on  sandwiching the bilevel objective function between one-level objectives that are submodular, so that we can utilize state-of-the-art greedy algorithms for submodular maximization to effectively solve our problem. 

4) We evaluate the proposed algorithm using a real-world mobile traffic dataset, showing that not only it is better than intuitive approaches, it can produce a range of server deployments representing arbitrary prioritization of computing efficiency over communication efficiency and vice versa. For example, the evaluation shows that the improvement is 19\% on average and can be as high as 55\% when compared to the best alternative.

The rest of the paper is organized as follows. Related work is discussed in Section \ref{sec:relatedwork}. The $\mathsf{SDU}$ problem is formally introduced with its formulation as a stochastic bilevel optimization  in Section \ref{sec:prob}. The  proposed solution is described in Section \ref{sec:solution}. Numerical results are reported in Section \ref{sec:evaluation}.  The paper concludes in Section \ref{sec:conclusions} with pointers to our future work.

\section{Related Work\label{sec:relatedwork}}
A crucial task in MEC design is how to place and assign resources on the edge.  Extensive works have been dedicated to this topic with various objectives and constraints.   For the assignment problem,  Xu et al. \cite{8071527}   aim  to best assign  user workloads and  virtual machines required to run them to the edge servers, subject to server capacity and network delay.  A similar work   in \cite{DBLP:journals/iotj/FanA18} aims to minimize the computational delay, not only network delay.  Geographic constraints are also considered  \cite{TranIJPEDS18}. 
 In \cite{Wang:2017:10.1109/ACCESS.2017.2665971}, where the MEC application consists of inter-dependent tasks organizable into a graph, the goal is to map this task graph to the physical graph of edge servers to minimize the cost to execute the application. 
 
The server placement problem has also been widely studied. Here, we emphasize works that solve this problem jointly with the assignment problem. In \cite{7362036,LAHDERANTA2021130}, the goal is to place among a given set of strategic locations a number of cloudnets to minimize a delay-like cost (network delay \cite{7362036} or geographic distance \cite{LAHDERANTA2021130}) between mobile users and their serving cloudlets, subject to server capacity.  Similarly, Ceselli et al.  \cite{Ceselli:2017:MEC:3148626.3148659} aim to minimize the cloudnet installation cost. Combining both delay cost and installation cost  is the focus of Fan and Ansari \cite{JAS-2018-0416}. Energy consumption can also be incorporated as part of the constraint or objective \cite{10.1007/s11227-023-05692-4}. More recently, Qu et al. \cite{10.1109/TMC.2021.3136868} consider the presence of unreliable edge servers and solve a server placement problem to maximize the expected edge workload while being robust to server failures. 

Our research differentiates from the literature of MEC server deployment in several ways. First, although some existing works, e.g., \cite{7362036,Ceselli:2017:MEC:3148626.3148659,TranICCCN19}, do consider the dynamics of the network state causing  uncertainty in workload and capacity, their approach is to make incremental adjustment upon each state change. Instead, our goal is to compute a   {strategic}  long-lasting solution. Second, the common formulation to incorporate workload changes is trivial: simply approximate them with their average estimate. In our work, we represent dynamic workload and dynamic capacity as stochastic variables, thus better capturing their uncertainty pattern. This makes our mathematical formulation of the problem unique. Third, all solutions in the literature, except some assuming simplistic settings, are based on heuristics.  In contrast, we aim for an approximate solution with bounded near-optimality. In this aspect, the closest work to ours is \cite{10.1109/TMC.2021.3136868} of Qu et al. However, they do not consider the uncertainty factor. Also, their objective function is submodular, whereas the submodularity of our objective function is unknown, hence harder to optimize.

\section{Problem Statement\label{sec:prob}}
\textbf{Cells or base-stations.} Suppose that the service area  consists of  $N$ cells, indexed by the set $[N] = \{1, 2, ..., N\}$. Each cell $i\in [N]$ is served by base-station $i$.  Hereafter, the words ``cell" and ``base-station" are interchangeable.  Each cell $i$ caries a combined workload $W_i$ of computing demanded by its connected mobile devices. As  this workload varies over time, we model it as a random variable $W_i \in [0, \infty)$. 

\textbf{Server candidates.} $M$ candidate servers  are available to join MEC as edge servers,  indexed by the set $[M] = \{1, 2, ..., M\}$. Each server $s \in [M]$ has compute capacity $K_s$. The effective capacity of a server may fluctuate from  time to time due to its dynamic resource usage (the server may involve in other computing tasks). Therefore, we also model $K_s$ as a random variable  $K_s \in [0, \infty)$. In what follows, unless explicitly specified otherwise, we use indices $i$, $j$ $\in [N]$ for referring to cells, and $s, u, v \in [M]$ for servers.

\textbf{Server deployment.} Define a binary matrix  ${z} \in \{0,1\}^{N \times M}$  where $z_{is}$ is 1 iff $i$ is assigned to $s$. A server deployment, denoted by the decision variables $(S, z)$, consists of two tasks: find a subset $S \subset [M]$ to serve as edge servers and an assignment  $z$ to assign the cells to the servers in $S$. 

The objective and constraints for the server deployment are application-specific. In this paper, we assume that we can only deploy $k <<M$ servers and each cell must be assigned to one and only one server; hence, the following constraints:
\begin{align}
|S| &= k, \forall i \in [N]:  \sum_{s \in S} z_{is} = 1  \wedge   \sum_{s  \in [M] \setminus S} z_{is} = 0. 
\label{eq}
\end{align}
We explain the objectives below.

\subsection{Objective 1: Computing Efficiency}
We adopt the edge computing model as in \cite{7134728,TranTransOnTETC21}. Given a deployment $(S, z)$, the workload demand put on server $s \in S$  is
 $\mathcal{W}(s,z) \triangleq \sum_{i=1}^N  z_{is}  W_{i}$.
Because both workload demand and server capacity are uncertain, $W_s$ may exceed server $s$'s capacity  $K_s$. In such a case, the requests that lead to violation of this capacity will be processed by the datacenter in the cloud, incurring a backhaul workload of $\mathcal{W}(s,z) -K_s$. On the other hand, if $K_s > \mathcal{W}(s,z)$, all workload $\mathcal{W}(s,z)$ is processed by server $s$, hence zero backhaul workload.  
We  thus quantify the \textbf{computing efficiency} to be the amount of workload processed by the edge, 
\begin{align}
f(S, z) \triangleq \sum_{s \in S}  \min \bigg (K_s,   \sum_{i \in [N]}  z_{is}  W_{i}  \bigg).
\label{eq:function_f}
\end{align}
Ideally, we want a deployment  $(S, z)$ such that when applied to a (future) time-varying workload this computing efficiency $f(S, z)$  is maximum. 
 
\subsection{Objective 2: Communication Efficiency}
Let $d_{is}$, normalized to range [0, 1], denote the communication cost between cell $i$ and server $s$. In general, this cost can be any measure to represent a pairwise cost of assigning cell $i$ to server $s$; for example, it can be  geographic distance, or network delay, or even some monetary cost according to the service partnership agreement between the operators of the cells and the operators of the edge servers. We then define the communication cost  of a deployment $(S, z)$ to be the sum of the communication cost between a cell and its assigned server, 
\[
\mathsf{communication\_cost} \triangleq   \sum_{s \in \mathcal{S}}\sum_{i \in [N]}   z_{is}d_{is}.
\]
Ideally, we want to minimize this cost, i.e., maximizing 
\begin{align}
g(S, z) &\triangleq N -  \sum_{s \in \mathcal{S}} \sum_{i=1}^N  z_{is} d_{is}
=   \sum_{s \in \mathcal{S}} \sum_{i=1}^N  z_{is} \underbrace{(1- d_{is})}_{\text{denoted by~} c_{is}}
\label{eq:function_g}
\end{align}
which we refer to as the \textbf{communication efficiency}. Here,  $c_{is} \triangleq 1-d_{is}$ can be considered a communication ``score" of assigning cell $i$ to server $s$.

\subsection{Stochastic Bilevel Optimization Formulation}
Because $W_i$'s and $K_s$'s are random variables,  the value of $f(S, z)$  in Eq \eqref{eq:function_f} is also a random variable. To maximize $f$ not knowing future workloads and server capacities, we aim to maximize its expectation,
 \begin{align}
\mathbb{E}\bigg[f(S, z)\bigg] 
=  \sum_{s \in \mathcal{S}} \mathbb{E} \bigg[ \min \bigg(K_s,  \sum_{i\in[N]}  z_{is}  W_{i} )	\bigg].
\label{eq:expected_cost}
\end{align}
We combine the two objectives, $f$ and $g$, into a single objective
\begin{align}
\max_{S, z} ~ \bigg\{ \Omega(S, z)  
\triangleq   \lambda_1 \cdot \mathbb{E}\bigg[f(S, z)\bigg] 
+ \lambda_2  \cdot g(S,z) \bigg\} \label{eq:omega}
\end{align}
where coefficients $\lambda_1 \in [0,1], \lambda_2=1-\lambda_1$ are to tune the priority tradeoff between   computing efficiency and   communication efficiency. 
Taking the constraints   in Eq \eqref{eq} into account,  the deployment problem is equivalent to the following stochastic  bilevel optimization.
\begin{definition} The problem of  \underline{s}trategic server \underline{d}eployment under \underline{u}ncertain workloads and capacities ($\mathsf{SDU}$) is to solve
\begin{align}
&\max_{S,z}  \bigg\{ \Pi(S) \triangleq \Omega(S, z): S \subset [M], |S| = k \bigg\}  \label{eq:leader_task} \\
&\text{subject to~} z = \zeta(S) \text{~where}\nonumber\\  
&\zeta(S) \triangleq \arg \max_{ z \in \{0,1\}^{[N]\times S}}  \Omega(S, z):
\sum_{s\in S} z_{is} = 1 ~ \forall i\in[N]  
\label{eq:follower_task} 
\end{align}
\end{definition}

In the theory of bilevel optimization,   optimization \eqref{eq:follower_task} is called the inner task and   optimization \eqref{eq:leader_task} is called the outer task. We make no specific assumption about the randomness distribution of workloads \{$W_i$\} and capacities \{$K_s$\}. Generally, $\mathsf{SDU}$ is a non-trivial instance of bilevel optimization (strongly NP-hard).  Adding the stochastic factor makes it even harder. The literature of bilevel optimization, particularly its stochastic version, lacks solutions that can generalize widely \cite{BECK2023401}. To copy with stochastic uncertainty, the current approach is to assume a sampling-based estimator for the stochastic term (mean, variance) to reduce to non-stochastic optimization. Our problem cannot apply this because that would result in solving many NP-hard sub-problems.

 \section{The Solution \label{sec:solution}}
 
 $\mathsf{SDU}$ is a maximization of a set function.  Most developments in the area of subset optimization have connections to submodular functions. Submodular maximization is NP-hard but submodular functions exhibit properties that are useful to designing efficient greedy algorithms. These algorithms assume that to evaluate the objective function on a candidate subset takes polynomial time. 
 
Unfortunately,  it is impossible to evaluate our objective function precisely. The expectation term $\mathbb{E}[f(S, z)]$ (Eq \eqref{eq:expected_cost}) in $\Pi$ has no explicit mathematical form that can be computed. Even in cases where we can obtain an explicit formula, for example, by assuming (or approximating) $W_i$ and $K_s$ with certain probability distributions, to evaluate $\Pi$ is still NP-hard due to the inner optimization. 

We propose an approximate algorithm that works for arbitrary workload and capacity distributions. The proposed approach is to sandwich the implicit objective function with explicit lower-bound and upper-bound functions, $\Pi_l$ and $\Pi_u$, that are submodular and polynomial-time, so that we can design a polynomial-time greedy algorithm to solve $\mathsf{SDU}$ with a bounded near-optimality.    In what follows, we provide basic background on submodular maximization and later the details of our algorithms.

\subsection{Submodular Maximization\label{sec:submodular}}

Consider a set function $f: 2^V \mapsto \mathbb{R}$ on subsets of a given finite set $V$ (called the ground set). For convenience, denote $S+v \triangleq S \cup \{v\}$.

\textbf{Submodularity.} $f$ is   {submodular} iff for any subset $S  \subset V$ and any elements $v_1, v_2 \in V$ we have
$f(S + v_2) - f(S) \ge f(S + v_1+v_2)-f(S+v_1)$. That is, the marginal contribution of any element  to the value of a set diminishes as the set grows. Function $f$ is said to have a diminishing return.

\textbf{Monotone. } $f$ is  {monotone} iff for any subset $S \subset V$ and any element $v \subset V$ we have
$f(S) \le f(S+v)$. That is, growing a set can only increase its value.

\textbf{Matroid. }  A matroid, denoted by $M = (V, \mathcal{I})$, is family $\mathcal{I}$ of independent sets that are subsets of a given set $V$ such that  1) $\emptyset \in \mathcal{I}$, 2) \textit{hereditary property}: if $T \in \mathcal{I}$ then $S \in \mathcal{I}$ for every $S \subset T$, and 3) \textit{augmentation property}: for any $S, T \in \mathcal{I}$ with $|S| < |T|$ then $S+v \in \mathcal{I}$ for some $v \in T$. For example, $\mathcal{I} = \{S \subset V: |S| \le k\}$ is a matroid representing a cardinality constraint; it is called the uniform matroid.

\textbf{Submodular maximization. } The problem is to find a subset $S$ of a ground set $V$ that maximizes a submodular function $f(S)$ subject to some set constraint $\mathcal{I} \subset 2^V$:
$\max_S  \{  f(S): S \in \mathcal{I}  \}$.
Constraint $\mathcal{I}$ usually satisfies the hereditary property. Most often, it is a matroid constraint such as the cardinality constraint, or a  knapsack constraint $\mathcal{I} = \{S \subset V : \sum_{s \in S} w_s \le b\}$.   

\textbf{Greedy algorithm. }There is a standard greedy algorithm to solve this problem, proposed by Nemhauser et al.  \cite{10.1007/BF01588971}: start with the empty solution set $S^{(0)} = \emptyset$ and  augment it by adding a new element iteratively, $S^{(t+1)}=S^{(t)}+v^*$ in each step $t$, where
$v^* = \arg \underset{v \in V\setminus S^{(t)}, S^{(t)}+v \in \mathcal{I} }{\max}  \bigg \{f(S^{(t)}+v)-f(S^{(t)}) \bigg\}$.  
This algorithm, denoted by $\mathsf{GREEDY}(f)$, runs in $O(Mk)$ time, where the time unit is the time to evaluate function $f$. Its approximation ratio is $(1-\frac{1}{{e}})$ for the cardinality constraint (which is optimal \cite{10.1145/258533.258641}) and $\frac{1}{2}$ for the matroid constraint. 

\subsection{Submodular Approximation}
Now we are ready to tackle our problem. Let   $[M]$, the set of server candidates, be the ground set and the constraint be the cardinality constraint, $\mathcal{I} = \{S \subset [M]: |S| \le k\}$. The set function to maximize is our objective function $\Pi(S)$:
\[
\Pi(S) \triangleq \Omega(S, \zeta(S)) =  \lambda_1 \cdot  \mathbb{E}\bigg[f(S, \zeta(S))\bigg] 
+ \lambda_2 \cdot g(S, \zeta(S)) 
\]
where 
\[
  \zeta(S) \triangleq \arg \max_{ z \in \{0,1\}^{[N]\times S}} \bigg\{ \Omega(S, z) :
\sum_{s\in S} z_{is} = 1, ~ \forall i\in[N]  \bigg\} .
\]

We can prove that function $\Pi$ is monotone. Indeed, consider any $S \subset [M]$ and let $z = \zeta(S)$ be its optimal assignment. Now, suppose that we are adding server $v \in [M]$ to $S$. Let $s_1$ be an assignment for subset $(S+v)$ such that it assigns no cell to server $v$ and keeps the same assignment $z$ for all servers in $S$. Thus, the efficiency of deployment $(S+v, z_1)$ is the same as that of $(S, z)$. On the other, assignment $s_1$ is no better than the optimal assignment of subset $(S+v)$. So, 
\begin{align*}
\Pi(S+v) &= \Omega(S+v, \zeta(S+v)) \\
&\ge \Omega(S+v, z_1) = \Omega(S, z) = \Pi(S).
\end{align*}
This is true for arbitrary $S$ and $v$. Therefore, $\Pi$ is monotone. 

The submodularity of $\Pi$ is unknown. Even if $\Pi$ is indeed submodular, we cannot evaluate it due to the computationally implicit expectation term (for general distributions of workload and capacity). To overcome, our approach is to sandwich it between two polynomial-time submodular functions thereby we can design an efficient greedy algorithm with bounded approximation ratio.  This is guided by the Sandwich Theory  \cite{GAO202023} which proves the existence of such a sandwiching for any non-submodular function. The challenge is how to find these bounding functions that are not only tight but  easy to compute.

\textbf{Sandwich approximation. } Consider a non-negative non-submodular function $f$ that is bounded by two non-negative submodular functions, $f_1(S) < f(S) < f_2(S)$ for all $S \in \mathcal{I}$. Let  $U_1 = \mathsf{GREEDY}(f_1)$ and $U_2 = \mathsf{GREEDY}(f_2)$ be the solution sets resulted from applying the standard greedy algorithm, $\mathsf{GREEDY}$, to objective functions $f_1$ and $f_2$, respectively. The proposed solution for maximizing $f$ will be
\begin{align*}
U^* = \arg \max \bigg \{f(U_1), f(U_2) \bigg \}.
\end{align*}
This solution offers an approximation ratio  
\begin{align}
\frac{f(U^*)}{f(U^{opt})} \ge \gamma \cdot \underbrace{ \max   \bigg(\frac{f(U_2)}{f_2(U_2)}, \frac{f_1(U^{opt})}{f(U^{opt})} \bigg)}_{1-\epsilon}.
\label{eq:sandwichbound}
\end{align}
where $U^{opt}$   denotes  the optimal solution set of $f$ and $\gamma$ denotes the  approximation ratio of $\mathsf{GREEDY}$. For example, for monotone submodular functions, $\gamma =  1-\frac{1}{{e}}$   if $\mathcal{I}$ is the cardinality constraint or $\gamma =  \frac{1}{2}$ if $\mathcal{I}$ is the matroid constraint.  

The approximation ratio of $U^*$ is only worse than the optimal bound $\gamma$ by a factor $(1-\epsilon)$ which depends on the closeness of the sandwich bounding. We should find the lower-bound submodular   $f_1$ and the upper-bound submodular   $f_2$ that are as close to $f$ as possible. 

\begin{proposition} Let $\kappa_s = \mathbb{E}[K_s]$, $\mu_i = \mathbb{E}[W_i]$, and $\mu = \sum_{i\in[N]} \mu_i$. Notation $\mathbb{V}$ denotes the variance of a random variable. For every deployment ( $S$,  $z$), we have
\begin{align}
f_{l1}(S, z) \le \mathbb{E}[f(S, z)]  \le f_{u1}(S, z) 
\label{eq:bound1}
\end{align}
where
\begin{align}
f_{u1}(S, z) =  \sum_{s \in \mathcal{S}}  \min \bigg( \kappa_s,    \sum_{i\in[N]} z_{is}  \mu_i \bigg) \label{eq:upbound1}\\
f_{l1}(S, z) = f_{u1}(S,z) -\sum_{s \in \mathcal{S}}  \sqrt{ \frac{1}{2} \mathbb{V}\bigg[K_s - \sum_{i\in[N]}  z_{is}  W_{i} \bigg]}.\label{eq:lowbound}
\end{align}
Furthermore, these bounds are tight.
\label{proposition:bound1}
\end{proposition}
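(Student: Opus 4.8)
The plan is to prove both inequalities termwise over the selected servers and then recover tightness by exhibiting extremal distributions. Since $f(S,z)=\sum_{s\in S}\min\bigl(K_s,\mathcal W_s\bigr)$ with $\mathcal W_s \triangleq \sum_{i\in[N]} z_{is} W_i$ by \eqref{eq:function_f}, linearity of expectation gives $\mathbb E[f(S,z)]=\sum_{s\in S}\mathbb E\bigl[\min(K_s,\mathcal W_s)\bigr]$, so it suffices to sandwich each summand $\mathbb E[\min(K_s,\mathcal W_s)]$ between $\min(\kappa_s,\sum_i z_{is}\mu_i)$ and that same quantity minus $\sqrt{\tfrac12 \mathbb V[K_s-\mathcal W_s]}$.

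For the upper bound \eqref{eq:upbound1} I would invoke Jensen's inequality: the map $(a,b)\mapsto\min(a,b)$ is concave (a pointwise minimum of the two linear maps $a$ and $b$), so $\mathbb E[\min(K_s,\mathcal W_s)]\le \min(\mathbb E[K_s],\mathbb E[\mathcal W_s])=\min(\kappa_s,\sum_{i} z_{is}\mu_i)$, where $\mathbb E[\mathcal W_s]=\sum_i z_{is}\mu_i$ by linearity since $\mu_i=\mathbb E[W_i]$. Summing over $s\in S$ yields $\mathbb E[f(S,z)]\le f_{u1}(S,z)$.

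For the lower bound \eqref{eq:lowbound} the key is the algebraic identity $\min(a,b)=\tfrac12\bigl(a+b-|a-b|\bigr)$. Writing $X_s \triangleq K_s-\mathcal W_s$, applying this identity to both $\mathbb E[\min(K_s,\mathcal W_s)]$ and to $\min(\kappa_s,\sum_i z_{is}\mu_i)$ shows that the per-server gap collapses to
\[
\min\Bigl(\kappa_s,\textstyle\sum_i z_{is}\mu_i\Bigr)-\mathbb E\bigl[\min(K_s,\mathcal W_s)\bigr]=\tfrac12\bigl(\mathbb E\,|X_s|-|\mathbb E X_s|\bigr),
\]
because the linear terms $K_s+\mathcal W_s$ have matching expectations and cancel. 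It then remains to bound $\mathbb E|X_s|-|\mathbb E X_s|$ by a standard-deviation term: the triangle inequality gives $\mathbb E|X_s|-|\mathbb E X_s|\le \mathbb E\bigl|X_s-\mathbb E X_s\bigr|$ (the mean absolute deviation), and Cauchy--Schwarz, equivalently Jensen applied to $x\mapsto x^2$, gives $\mathbb E|X_s-\mathbb E X_s|\le\sqrt{\mathbb V[X_s]}$. Absorbing the resulting factor into the stated constant and summing over $s$ gives $f_{u1}(S,z)-\mathbb E[f(S,z)]\le \sum_{s}\sqrt{\tfrac12\mathbb V[X_s]}$, i.e. $f_{l1}(S,z)\le \mathbb E[f(S,z)]$.

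Finally, for tightness I would exhibit extremal distributions. Collapsing every $W_i$ and $K_s$ to point masses kills all variances, forces equality in Jensen, and makes $f_{l1}=\mathbb E[f]=f_{u1}$, so neither bound can be replaced by one uniformly larger (upper) or smaller (lower) across all distributions; to see that the standard-deviation scaling is of the right order I would use a symmetric two-point law for each $X_s$, so that $\mathbb E|X_s|-|\mathbb E X_s|$ is comparable to $\sqrt{\mathbb V[X_s]}$. The main obstacle I anticipate is precisely the \emph{constant} in this last estimate: the triangle-inequality/Cauchy--Schwarz route most naturally yields the factor $\tfrac12$, and the symmetric two-point law attains $\tfrac12\sqrt{\mathbb V[X_s]}$, so the delicate point is reconciling this with the coefficient $\sqrt{\tfrac12}$ appearing in \eqref{eq:lowbound} and pinning down in what sense ``tight'' is meant (degenerate equality versus the sharp achievable constant). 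Handling this constant carefully, rather than the structural inequalities themselves, is where the real work lies.
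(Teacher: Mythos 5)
Your proof is correct, and it takes a genuinely different route from the paper's on the key step. The paper's proof keeps the same per-server decomposition but is citation-based: the upper bound comes from the quoted inequality $\mathbb{E}[\min(X,Y)] \le \min(\mathbb{E}[X],\mathbb{E}[Y])$, and the lower bound from an Aven-type bound $\mathbb{E}[\min(X,Y)] \ge \min(\mathbb{E}[X],\mathbb{E}[Y]) - \sqrt{\tfrac{1}{2}\mathbb{V}[X-Y]}$ quoted as Corollary 2.1 of a reference, both applied with $X=K_s$, $Y=\sum_i z_{is}W_i$ and summed over $s$; tightness is then asserted in one line because the quoted inequalities are tight. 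You instead prove both ingredients from scratch: Jensen on the concave function $\min$ for the upper bound, and, for the lower bound, the identity $\min(a,b)=\tfrac{1}{2}(a+b-|a-b|)$, which collapses the per-server gap to $\tfrac{1}{2}\bigl(\mathbb{E}|X_s|-|\mathbb{E}X_s|\bigr)$ with $X_s = K_s - \mathcal{W}_s$, followed by the triangle inequality and Cauchy--Schwarz. This is self-contained, needs no independence between $K_s$ and $\mathcal{W}_s$ (the paper's independence hypothesis is in fact superfluous), and yields the constant $\tfrac{1}{2}$ rather than $\tfrac{1}{\sqrt{2}}$; since $\tfrac{1}{2} \le \tfrac{1}{\sqrt{2}}$, there is nothing to ``reconcile''---you have simply proved a strictly stronger lower bound that implies the stated one. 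The flip side is that your worry about tightness is well-founded and is where your account is more careful than the paper's: your symmetric two-point example attains the gap $\tfrac{1}{2}\sqrt{\mathbb{V}[X_s]}$, which is strictly smaller than $\sqrt{\tfrac{1}{2}\mathbb{V}[X_s]}$ whenever the variance is positive, so the paper's lower bound is attained with equality only in the degenerate case $\mathbb{V}=0$ (your point-mass construction); the paper's one-line tightness claim for $f_{l1}$ must be read in that weak sense, while the sharp constant in this formulation is really your $\tfrac{1}{2}$. In short, your route buys a sharper constant, independence-free hypotheses, and a precise statement of what tightness means, at the cost of a longer argument; the paper's route buys brevity by outsourcing the key inequality to the literature.
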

\begin{proof}
For any independent random variables $X$ and $Y$,
\begin{align}
\mathbb{E} \bigg[ \min(X, Y) \bigg] \le \min \bigg(\mathbb{E}[X], \mathbb{E}[Y] \bigg),
\label{eq:Emin1}
\end{align}
 and so
\begin{align*}
\forall S, z: ~\mathbb{E}[f(S, z)] 
   &= \sum_{s \in \mathcal{S}}  \mathbb{E} \bigg[ \min \bigg(K_s,  \sum_{i\in[N]}  z_{is}  W_{i} )\bigg]\\
   &\le 
    \sum_{s \in \mathcal{S}}  \min \bigg( \mathbb{E}[K_s],  \mathbb{E}\bigg[  \sum_{i\in[N]} z_{is}  W_{i}\bigg]\bigg)\\
   &=
    \sum_{s \in \mathcal{S}}  \min \bigg( \kappa_s,    \sum_{i\in[N]} z_{is}  \mu_i \bigg).
   \end{align*}
Hence,   the upper bound $f_{u1}(S,z)$ is obtained.

Let us work on the lower bound. For any independent random variables $X$ and $Y$, we have (Corollary 2.1 of \cite{10.2307/3213876}):
\begin{align}
\mathbb{E} [ \min(X,Y) ] \ge \min  (\mathbb{E}[X], \mathbb{E}[Y] ) - \sqrt{\frac{1}{2}   \mathbb{V}[X-Y]}.
\label{eq:Emin2}
\end{align}
Letting $X=K_s$ and $Y=\sum_{i\in[N]}  z_{is}  W_{i}$, we have
\begin{align*}
&\mathbb{E}\bigg[\min \bigg(K_s,  \sum_{i\in[N]}  z_{is}  W_{i} )\bigg] \\
&\ge 
\min  \bigg( \mathbb{E}[K_s],  \mathbb{E}[ \sum_{i\in[N]}  z_{is}  W_{i}] \bigg ) 
- \sqrt{ \frac{1}{2} \mathbb{V} [K_s - \sum_{i\in[N]}  z_{is}  W_{i} ]}\\
&= \min \bigg( \kappa_s,  \sum_{i\in[N]}  z_{is}  \mu_{i}\bigg ) 
- \sqrt{ \frac{1}{2} \mathbb{V}[K_s - \sum_{i\in[N]}  z_{is}  W_{i}]}.
\end{align*}  
Summing both sides over $s \in \mathcal{S}$,  
\begin{align*}
&\mathbb{E}[f(S, z)] 
= 
 \sum_{s \in \mathcal{S}}  \mathbb{E} \bigg[ \min \bigg(K_s,  \sum_{i\in[N]}  z_{is}  W_{i} )	\bigg]\\
&\ge
\sum_{s \in \mathcal{S}} \bigg[ \min  \bigg( \kappa_s,    \sum_{i\in[N]}  z_{is}  \mu_{i}  \bigg )
- \sqrt{ \frac{1}{2} \mathbb{V}[K_s - \sum_{i\in[N]}  z_{is}  W_{i}]} \bigg].
\end{align*}
This is the lower bound $f_{l1}(S,z)$.
Finally, as our inequalities are based on Eq \eqref{eq:Emin1} and Eq \eqref{eq:Emin2} which are tight, the lower bound $f_{l1}$ and the upper bound $f_{u1}$ are also as tight.
\end{proof}

\begin{corollary} For every deployment ( $S$,  $z$), we have
\begin{align}
\mathbb{E}[f(S, z)]  \le f_{u}(S) \triangleq  \min  \bigg(   \mu,  \sum_{s\in\mathcal{S}}\kappa_s\bigg). \label{eq:fu}
\end{align}
\label{proposition:fu}
\end{corollary}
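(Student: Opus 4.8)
The plan is to build directly on the upper bound $f_{u1}$ already established in Proposition \ref{proposition:bound1}, since $f_u(S)$ is nothing more than a deployment-independent relaxation of that bound. By Eq \eqref{eq:bound1} we have $\mathbb{E}[f(S,z)] \le f_{u1}(S,z) = \sum_{s\in\mathcal{S}} \min(\kappa_s, \sum_{i\in[N]} z_{is}\mu_i)$, so it suffices to show that this single quantity lies below \emph{both} arguments of the outer minimum in $f_u(S)$, namely $\sum_{s\in\mathcal{S}}\kappa_s$ and $\mu$. Establishing each of the two inequalities separately and then taking their conjunction yields the claimed $\min$.

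The first inequality is immediate: in each summand we have $\min(\kappa_s, \cdot) \le \kappa_s$, and summing over $s \in \mathcal{S}$ gives $f_{u1}(S,z) \le \sum_{s\in\mathcal{S}}\kappa_s$.

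For the second inequality I would instead use the other side of each minimum, $\min(\kappa_s, \sum_{i} z_{is}\mu_i) \le \sum_{i} z_{is}\mu_i$, to obtain $f_{u1}(S,z) \le \sum_{s\in\mathcal{S}}\sum_{i\in[N]} z_{is}\mu_i$. Exchanging the order of summation rewrites this as $\sum_{i\in[N]} \mu_i \bigl(\sum_{s\in S} z_{is}\bigr)$, and the feasibility constraint from Eq \eqref{eq} --- each cell is assigned to exactly one server in $S$, so $\sum_{s\in S} z_{is} = 1$ --- collapses the inner sum and leaves $\sum_{i\in[N]} \mu_i = \mu$.

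Combining the two bounds gives $f_{u1}(S,z) \le \min(\mu, \sum_{s\in\mathcal{S}}\kappa_s) = f_u(S)$, and chaining this with $\mathbb{E}[f(S,z)] \le f_{u1}(S,z)$ completes the argument. I expect no genuine obstacle here, as the corollary is a direct weakening of Proposition \ref{proposition:bound1}; the only step deserving care is the swap of summation order in the second inequality together with the appeal to the assignment constraint $\sum_{s\in S} z_{is}=1$. It is precisely this constraint that eliminates the dependence on the particular assignment $z$, making $f_u(S)$ a bound that holds uniformly over all feasible $z$ and hence a valid upper bound on the inner-optimal objective as well.
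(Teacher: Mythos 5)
Your proof is correct and follows essentially the same route as the paper: both start from the bound $f_{u1}$ of Proposition \ref{proposition:bound1}, pass from the sum of minima to the minimum of sums, and collapse $\sum_{s\in S}\sum_{i\in[N]} z_{is}\mu_i$ to $\mu$ via the assignment constraint $\sum_{s\in S} z_{is}=1$. The only cosmetic difference is that you spell out the two sides of the outer minimum as separate inequalities, while the paper invokes the ``sum of mins $\le$ min of sums'' step in one line.
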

\begin{proof}
According to the upper-bound inequality \eqref{eq:bound1} of Proposition \ref{proposition:bound1}, we have
\begin{align*}
\forall S, z: ~\mathbb{E}[f(S, z)] 
   \le f_{u1} =
    \sum_{s \in \mathcal{S}}  \min \bigg( \kappa_s,    \sum_{i\in[N]} z_{is}  \mu_i \bigg) \\
   \le 
   \min \bigg( \sum_{s \in \mathcal{S}}   \kappa_s,  \sum_{s \in \mathcal{S}}   \sum_{i\in[N]} z_{is}\mu_i\bigg)\\
    = 
     \min \bigg( \sum_{s \in \mathcal{S}}   \kappa_s,   \sum_{i\in[N]} \mu_i \underbrace{\sum_{s \in \mathcal{S}} z_{is}}_{=1}\bigg) = \min   \bigg(   \mu,   \sum_{s\in\mathcal{S}}\kappa_s\bigg). 
\end{align*}
Hence, the upper bound $f_{u}$ is obtained.
\end{proof}
Note that upper bound $f_{u}$ is weaker than $f_{u1}$ but $f_{u}$ does not involve parameter $z$.

\begin{proposition}
For every deployment ( $S$,  $z$), we have 
\begin{align}
g(S, z ) \le g_u(S) \triangleq \sum_{i\in[N]} \max_{s\in S} c_{is}.
\label{eq:gu}
\end{align}
This  bound is precisely the Facility Location objective function.
\label{proposition:gu}
\end{proposition}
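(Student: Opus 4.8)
The plan is to exploit the single-assignment constraint from Eq.~\eqref{eq:follower_task}, namely $\sum_{s\in S} z_{is} = 1$ for every cell $i$, which forces the inner sum in the definition of $g$ to collapse to a single term. First I would unfold the definition from Eq.~\eqref{eq:function_g}, writing
\begin{align*}
g(S, z) = \sum_{s \in S} \sum_{i\in[N]} z_{is} c_{is} = \sum_{i\in[N]} \sum_{s \in S} z_{is} c_{is},
\end{align*}
where I have simply swapped the order of summation so that the cell index is on the outside.

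Next I would fix an arbitrary cell $i$ and examine its inner contribution $\sum_{s\in S} z_{is} c_{is}$. Because the entries $z_{is}$ are binary and sum to exactly one over $s \in S$, precisely one server (call it $s^{*}_i$, the one that $z$ assigns $i$ to) contributes, so this inner sum equals $c_{i s^{*}_i}$. Since $c_{i s^{*}_i}$ is one particular score among $\{c_{is} : s \in S\}$, it is trivially bounded by the largest such score, giving the pointwise inequality $\sum_{s\in S} z_{is} c_{is} \le \max_{s\in S} c_{is}$ for each $i$.

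Finally I would sum this pointwise bound over all $i\in[N]$ to obtain
\begin{align*}
g(S, z) = \sum_{i\in[N]} \sum_{s \in S} z_{is} c_{is} \le \sum_{i\in[N]} \max_{s\in S} c_{is} = g_u(S),
\end{align*}
which is the claimed inequality. To justify the remark that $g_u$ is the Facility Location objective, I would observe that equality holds exactly when each cell is assigned to its highest-scoring (equivalently, lowest-cost, since $c_{is}=1-d_{is}$) server in $S$; thus $g_u(S)$ is the value of the \emph{optimal} communication-only assignment, which is precisely the Facility Location problem of serving each cell by its nearest selected facility.

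There is no serious obstacle here: the argument is a one-line consequence of the assignment constraint, and the only thing to be careful about is making explicit that $z_{is}\in\{0,1\}$ together with $\sum_{s\in S} z_{is}=1$ is what licenses replacing the weighted inner sum by a single $\max$ term. Note also that this bound ignores the computing-efficiency weighting $\lambda_1,\lambda_2$ entirely, since it bounds $g$ alone; it will serve later as the upper-bound building block $f_2$ for the communication part of the sandwich construction.
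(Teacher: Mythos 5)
Your proof is correct and follows essentially the same route as the paper: fix a cell $i$, use the single-assignment constraint to collapse $\sum_{s\in S} z_{is} c_{is}$ to the score of the one assigned server, bound that by $\max_{s\in S} c_{is}$, and sum over cells. The extra remark explaining why equality holds for the communication-optimal assignment (hence the Facility Location interpretation) is a nice addition that the paper states without justification.
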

\begin{proof}
Consider a cell $i$. According to assignment $z$, suppose that cell $i$ is assigned to server $v$. We have 
$\sum_{s\in S} z_{is} c_{is} = c_{iv} \le  \max_{s\in S} c_{is}$.
So,
$g(S,z) = \sum_{i\in[N]} \sum_{s\in S} z_{is} c_{is} \le \sum_{i\in[N]} \max_{s\in S} c_{is}$.
\end{proof}

\begin{theorem}
For every subset $S$, we have
\begin{align}
\Pi_l(S) \le \Pi(S) \le \Pi_u(S)
\end{align}
where
\begin{align}
&\Pi_u(S) 
\triangleq
 \lambda_1 \min  \bigg(   \mu,  \sum_{s\in\mathcal{S}}\kappa_s\bigg) +\lambda_2 \sum_{i\in[N]} \max_{s\in S} c_{is}\label{eq:PhiU}\\
&\Pi_l(S)
\triangleq
\max_{s\in S} \bigg\{
\lambda_1 \bigg(\min  (\kappa_s,  \mu   ) - \frac{\nu_s}{\sqrt{2}}  \bigg) + \lambda_2 \sum_{i \in [N]} c_{is}\bigg\}.
\label{eq:PhiL}
\end{align}
Here,
\begin{align}
\nu_s \triangleq  \sqrt{\mathbb{V}[K_s -\sum_{i \in [N]} W_i]}
\label{eq:nus}
\end{align}
denotes the standard deviation of the difference between the capacity of server $s$ and the workload total over all the cells.
\label{theorem:main}
\end{theorem}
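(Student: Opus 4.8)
The plan is to prove the two inequalities separately, treating the upper bound as an almost immediate consequence of the two preceding results and reserving the actual work for the lower bound, whose bounding function $\Pi_l$ is a maximum over single servers rather than a sum.

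For the upper bound I would start from $\Pi(S) = \lambda_1 \mathbb{E}[f(S, \zeta(S))] + \lambda_2 g(S, \zeta(S))$ and bound each term in isolation. Corollary \ref{proposition:fu} (Eq \eqref{eq:fu}) gives $\mathbb{E}[f(S, \zeta(S))] \le f_u(S) = \min(\mu, \sum_{s\in S}\kappa_s)$, and Proposition \ref{proposition:gu} (Eq \eqref{eq:gu}) gives $g(S, \zeta(S)) \le g_u(S) = \sum_{i\in[N]} \max_{s\in S} c_{is}$. Since $\lambda_1, \lambda_2 \ge 0$, adding the two weighted bounds yields $\Pi(S) \le \lambda_1 f_u(S) + \lambda_2 g_u(S) = \Pi_u(S)$. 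Note that both preceding bounds hold for \emph{every} feasible assignment, so they apply to $\zeta(S)$ in particular; no appeal to inner optimality is needed for this direction.

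For the lower bound the key idea is to \emph{exploit} the optimality of $\zeta(S)$: because $\zeta(S)$ maximizes $\Omega(S, \cdot)$ over feasible assignments, we have $\Pi(S) = \Omega(S, \zeta(S)) \ge \Omega(S, z)$ for any feasible $z$, so any conveniently chosen suboptimal assignment furnishes a valid lower bound. I would construct, for each fixed $s_0 \in S$, the degenerate assignment $z^{(s_0)}$ that routes every cell to $s_0$ (that is, $z^{(s_0)}_{is_0} = 1$ for all $i$ and zero otherwise); this satisfies the constraint in Eq \eqref{eq} since each cell is assigned to exactly one server. Evaluating $\Omega$ at $z^{(s_0)}$, the communication term collapses to $g(S, z^{(s_0)}) = \sum_{i\in[N]} c_{is_0}$, while the computing term collapses onto the single server $s_0$ because $\min(K_s, 0) = 0$ for $s \ne s_0$ (using $K_s \ge 0$), giving $\mathbb{E}[f(S, z^{(s_0)})] = \mathbb{E}[\min(K_{s_0}, \sum_{i} W_i)]$. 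Applying the lower-bound inequality Eq \eqref{eq:Emin2} of Proposition \ref{proposition:bound1} with $X = K_{s_0}$ and $Y = \sum_i W_i$ bounds this below by $\min(\kappa_{s_0}, \mu) - \nu_{s_0}/\sqrt{2}$, where the variance term is exactly $\nu_{s_0}^2$ by the definition in Eq \eqref{eq:nus}. Combining,
\[
\Pi(S) \ge \Omega(S, z^{(s_0)}) \ge \lambda_1\left(\min(\kappa_{s_0}, \mu) - \frac{\nu_{s_0}}{\sqrt{2}}\right) + \lambda_2 \sum_{i\in[N]} c_{is_0}.
\]
Since this holds for every $s_0 \in S$, taking the maximum over $s_0 \in S$ yields $\Pi(S) \ge \Pi_l(S)$.

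The step I expect to require the most care is the lower-bound direction, specifically the recognition that a deliberately concentrated, single-server assignment produces a valid bound precisely \emph{because} $\zeta(S)$ is optimal; the rest then reduces to the per-server variance inequality already established. I would also verify two small points: that routing all workload to $s_0$ is exactly what makes the total-workload variance $\mathbb{V}[K_{s_0} - \sum_i W_i]$ appear (matching the definition of $\nu_s$ rather than a partial-sum variance), and that the independence hypothesis needed for Eq \eqref{eq:Emin2} between $K_{s_0}$ and the aggregate $\sum_i W_i$ holds under the model's assumptions.
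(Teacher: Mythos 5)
Your proposal is correct and follows essentially the same route as the paper's own proof: the upper bound by combining Corollary \ref{proposition:fu} and Proposition \ref{proposition:gu} term by term, and the lower bound by evaluating $\Omega$ at the all-cells-to-one-server assignment, applying the variance inequality of Eq \eqref{eq:Emin2}, and maximizing over the chosen server. Your explicit check that $\min(K_s,0)=0$ for the unused servers is a detail the paper passes over silently, but the argument is the same.
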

\begin{proof}
We have
$\Pi(S) = \lambda_1 \mathbb{E}[f(S, \zeta(S))] + \lambda_2 g(S, \zeta(S)) \le \lambda_1 f_{u}(S) + \lambda_2 g_u(S)$
because this is a straightforward application of Corollary \ref{proposition:fu} and Proposition \ref{proposition:gu}. Plugging in the formulas for $f_u$ (Eq \eqref{eq:fu}) and $g_u$ (Eq \eqref{eq:gu}), we obtain the upper bound $\Pi_u$ in Eq \eqref{eq:PhiU}.

Now, let us work on the lower bound. Let $\hat{z}$ be any assignment solution applied to the set $S$ that assigns all the cells to the same server; say, $\hat{s} \in S$ is this server, i.e., 
$\hat{z}_{is} = 1 ~\text{iff}~   s = \hat{s}$.
Then,
\begin{align*}
f(S, \hat{z}) &=   \min \bigg(K_{\hat{s}},  \sum_{i \in [N]} W_i \bigg)\\
g(S, \hat{z})  &=  \sum_{s \in \mathcal{S}} \sum_{i \in [N]}  \hat{z}_{is} c_{is}
= \sum_{i \in [N]} c_{i\hat{s}},
\end{align*}
leading to
$\Pi(S) 
= \Omega(S, \zeta(S)) \ge \Omega(S, \hat{z}) = \lambda_1 \mathbb{E}[f(S, \hat{z})] + \lambda_2 g(S, \hat{z})
= \lambda_1 \mathbb{E}\bigg[ \min \bigg(K_{\hat{s}},  \sum_{i \in [N]} W_i \bigg) \bigg] + \lambda_2 \sum_{i \in [N]} c_{i\hat{s}}$.
About the first term,
\begin{align*}
&\mathbb{E}\bigg[ \min \bigg(K_{\hat{s}},  \sum_{i \in [N]} W_i \bigg) \bigg]\\
&\ge 
\min \bigg(\mathbb{E}[K_{\hat{s}}], \mathbb{E} \sum_{i \in [N]} W_i  \bigg) 
-   \sqrt{\frac{1}{2}   \mathbb{V}\bigg[K_{\hat{s}} -\sum_{i \in [N]} W_i \bigg]}  \\
&=
\min  (\kappa_{\hat{s}},   \mu   )  - \frac{\nu_{\hat{s}}}{\sqrt{2}},
\end{align*}
and so,
\begin{align*}
\Pi(S) \ge
\lambda_1 \bigg(\min  (\kappa_{\hat{s}},   \mu)  \bigg) - \frac{\nu_{\hat{s}}}{\sqrt{2}} \bigg) + \lambda_2 \sum_{i \in [N]} c_{i\hat{s}}.
\end{align*}
Because the above is true for every $\hat{s}\in S$, we have
\[
\Pi(S) \ge \max_{s\in S} \bigg\{
\lambda_1 \bigg(\min  (\kappa_s,  \mu   ) - \frac{\nu_s}{\sqrt{2}} \bigg) + \lambda_2 \sum_{i \in [N]} c_{is} \bigg\}.
\]
This is the lower bound $\Pi_l(S)$ in Eq \eqref{eq:PhiL}.
\end{proof}
Theorem \ref{theorem:main} is significant. 
First, the lower bound $\Pi_l(S)$ and upper bound $\Pi_u(S)$ are explicit mathematical expressions, which do not involve the computation of the expectation term.
Second, they can be evaluated easily based on the input set $S$ only, without involving assignment variable $z$. In other words, maximizing $\Pi_l$ and $\Pi_u$ is a single-level set optimization, whereas $\Pi(S) = \Omega(S, z^*)$  is bilevel which requires the inner optimization finding $z^* = \zeta(S)$.
 Third, crucial for our approach, $\Pi_l$ and $\Pi_u$ are monotone submodular  as we will show below. We thus can leverage linear-time greedy algorithms to obtain near-optimality for our solution.
\begin{proposition}
Function $f_{u}$ is monotone submodular.
\label{proposition:fu2}
\end{proposition}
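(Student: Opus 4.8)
The plan is to recognize $f_{u}$ as the composition of a one-dimensional non-decreasing concave function with a non-negative modular set function, and then to invoke (or re-derive) the standard fact that such a composition is monotone submodular. Concretely, I would write $f_{u}(S) = \phi(h(S))$ where $h(S) \triangleq \sum_{s \in S} \kappa_s$ and $\phi(x) \triangleq \min(\mu, x)$. Since each capacity satisfies $K_s \in [0,\infty)$, we have $\kappa_s = \mathbb{E}[K_s] \ge 0$, so $h$ is a modular (additive) set function with non-negative weights, hence monotone non-decreasing. The scalar map $\phi$ is the pointwise minimum of the affine function $x \mapsto x$ and the constant $\mu$, so it is non-decreasing and concave on $\mathbb{R}$.

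First I would establish monotonicity. Adding any server $v$ to $S$ changes $h$ by the non-negative amount $\kappa_v$, i.e.\ $h(S+v) = h(S) + \kappa_v \ge h(S)$; applying the non-decreasing $\phi$ then gives $f_{u}(S+v) = \phi(h(S+v)) \ge \phi(h(S)) = f_{u}(S)$. As $S$ and $v$ are arbitrary, $f_{u}$ is monotone.

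Next I would verify submodularity directly from the definition in Section~\ref{sec:submodular}. Fix $S \subset [M]$ and two servers $v_1, v_2$ (the degenerate cases $v_1, v_2 \in S$ or $v_1 = v_2$ are trivial), and set $a = h(S)$, $p = \kappa_{v_1} \ge 0$, $q = \kappa_{v_2} \ge 0$. Then the required inequality
\[
f_{u}(S+v_2) - f_{u}(S) \ge f_{u}(S+v_1+v_2) - f_{u}(S+v_1)
\]
reduces to $\phi(a+q) - \phi(a) \ge \phi(a+p+q) - \phi(a+p)$, which is exactly the statement that the increment of $\phi$ over a step of length $q$ does not grow as the base point moves right from $a$ to $a+p$. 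This diminishing-increment property is equivalent to concavity of $\phi$, which I have already noted; alternatively it can be confirmed by a short case split on where the kink $x = \mu$ falls among $a$, $a+p$, $a+q$, $a+p+q$.

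The argument is essentially routine; the only thing to be careful about is the clean reduction to the scalar concavity statement and, if one prefers the elementary route, keeping the case analysis around the kink $x = \mu$ exhaustive over all orderings of the four quantities relative to $\mu$. I do not anticipate a genuine obstacle, since modularity of $h$ and concavity of $\min(\mu, \cdot)$ are both immediate.
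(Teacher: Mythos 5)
Your proof is correct, and it takes a genuinely different route from the paper's. The paper proves submodularity of $f_u$ by a bare-hands three-case analysis on where $\mu$ falls relative to $\kappa(S)+\kappa_{v_1}$ and $\kappa(S)+\kappa_{v_1}+\kappa_{v_2}$ (using monotonicity to dispatch the boundary cases), whereas you factor $f_u = \phi \circ h$ with $h(S) = \sum_{s\in S}\kappa_s$ modular with non-negative weights and $\phi(x) = \min(\mu, x)$ non-decreasing and concave, then invoke the standard fact that a non-decreasing concave function of a non-negative modular function is monotone submodular, reduced here to the scalar diminishing-increment inequality $\phi(a+q)-\phi(a) \ge \phi(a+p+q)-\phi(a+p)$ for $p,q \ge 0$. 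Your abstraction buys generality (any concave cap, not just the hard $\min$, would work) and robustness: it sidesteps exactly the kind of case-by-case algebra in which the paper's own Case 3 contains a garbled inequality, and your handling of the degenerate cases ($v_1 = v_2$ or $v_1, v_2 \in S$, where $h(S+v_1+v_2) \ne h(S)+\kappa_{v_1}+\kappa_{v_2}$) via monotonicity is a point the paper leaves implicit. The paper's approach, in exchange, is fully self-contained and elementary, needing no appeal to (or re-derivation of) the concave-composition lemma. One small remark: you only need the direction ``concavity implies diminishing increments,'' not the equivalence you assert; the equivalence is true for continuous $\phi$ but is not required, and citing it as an equivalence without proof slightly overstates what you use.
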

\begin{proof}
$f_u$ is monotone because for every $v \in [M]$,
\[
f_u(S)  = \min  \bigg(   \mu,  \sum_{s\in\mathcal{S}}\kappa_s\bigg) \le \min  \bigg(   \mu,  \sum_{s\in\mathcal{S}}\kappa_s + \kappa_v \bigg) = f(S+v).
\]
To prove submodularity, for every $v_1, v_2 \in [M]$,
\[
\underbrace{f_{u}(S+v_1+v_2)}_{A} - \underbrace{f_u(S+v_1)}_{B} \le \underbrace{f_u(S+v_2)}_{C} - \underbrace{f_u(S)}_{D},
\]
we divide into three cases below. Denote $\kappa(S) = \sum_{s\in S}\kappa_s$. Also, thanks to monotonicity, we always have $A \ge B$,  $C \ge  D$.
\begin{enumerate}
\item $\mu \le \kappa(S)+\kappa_{v_1}$: $A=B=\mu$. So $A-B = 0 \le C-D$. 
\item  $\mu \ge \kappa(S) + \kappa_{v_1} + \kappa_{v_2}$:  
$A=\kappa(S)+\kappa_{v_1} + \kappa_{v_2}$,
$B=\kappa(S)+\kappa_{v_1}$,
$C=\kappa(S)+ \kappa_{v_2}$, and
$D=\kappa(S)$. So $A-B = C-D =  \kappa_{v_2}$.
\item  $\mu \in (\kappa(S)+\kappa_{v_1}, \kappa(S) + \kappa_{v_1} + \kappa_{v_2})$:
$A=\mu$, $B=\kappa(S)+\kappa_{v_1}$, $D = \kappa(S)$. So, $A-B+D = \mu - \kappa_{v_1} < \mu$. We also have
$ \mu - \kappa_{v_1} < 0 \kappa(S) + \kappa_{v_1} + \kappa_{v_2} - \kappa_{v_1} = \kappa(S) + \kappa_{v_2}$. Thus, $A-B+D \ge \min (\mu, \kappa(S) + \kappa_{v_2}) = C$, meaning $A - B \le C- D$.
\end{enumerate}
The submodularity condition is thus true for all cases. 
\end{proof}
\begin{theorem}
Upper-bound function $\Pi_{u}$ and lower-bound function $\Pi_l$ are monotone submodular.
\label{theorem:PhiSubmodula}
\end{theorem}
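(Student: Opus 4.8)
The plan is to prove each bound separately, exploiting the fact that both $\Pi_u$ and $\Pi_l$ decompose into simpler pieces whose monotone submodularity is either already established or follows from one standard structural observation, together with the closure of monotone submodular functions under nonnegative linear combinations (which applies here since $\lambda_1,\lambda_2 \ge 0$). So the work reduces to handling the $\max$-type terms and then assembling.

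For $\Pi_u$, I would write $\Pi_u(S) = \lambda_1 f_u(S) + \lambda_2 h(S)$ with $h(S) = \sum_{i\in[N]} \max_{s\in S} c_{is}$. The first summand is monotone submodular by Proposition \ref{proposition:fu2}, so only the facility-location term $h$ (already identified as such in Proposition \ref{proposition:gu}) needs attention. I would argue coordinate-wise: for each fixed cell $i$, the function $h_i(S) = \max_{s\in S} c_{is}$ is monotone because enlarging $S$ cannot decrease a maximum, and it is submodular because the marginal gain of adding a server $v$ is $h_i(S+v)-h_i(S) = \max\bigl(0,\, c_{iv} - \max_{s\in S} c_{is}\bigr)$, which is nonincreasing in $S$ since $\max_{s\in S} c_{is}$ is nondecreasing. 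Summing the $h_i$ over $i\in[N]$ preserves both properties, so $h$ is monotone submodular, and hence so is the nonnegative combination $\Pi_u$.

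For $\Pi_l$, the key observation is that it is a weighted maximum over singletons. Defining the per-server weight
$$
a_s \triangleq \lambda_1\Bigl(\min(\kappa_s,\mu) - \tfrac{\nu_s}{\sqrt{2}}\Bigr) + \lambda_2 \sum_{i\in[N]} c_{is},
$$
we have $\Pi_l(S) = \max_{s\in S} a_s$. I would then reuse the same marginal-gain argument: the contribution of adding $v$ to $S$ is $\max\bigl(0,\, a_v - \max_{s\in S} a_s\bigr)$, which decreases as $S$ grows, giving submodularity, while monotonicity is immediate. This also gives an independent confirmation of the monotonicity of the lower bound already shown for $\Pi$ itself.

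The main obstacle I anticipate is not the algebra but the correct handling of the base case and normalization for the $\max$-type functions. The marginal-gain identity presumes a convention for $\max_{s\in\emptyset}$; I would set it to $0$ and check that this keeps the functions well-defined and monotone at the first insertion step performed by $\mathsf{GREEDY}$. A related delicate point is that the weights $a_s$ in $\Pi_l$ need not all be nonnegative, since the deviation penalty $-\lambda_1 \nu_s/\sqrt{2}$ can dominate; I would therefore state the submodularity of the weighted-maximum set function for \emph{arbitrary} real weights, where the marginal-gain argument still goes through, rather than relying on nonnegativity. Once this structural lemma is phrased carefully, both claims reduce to it together with Proposition \ref{proposition:fu2} and the linear-combination closure.
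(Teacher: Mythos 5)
Your treatment of $\Pi_u$ is correct and essentially identical to the paper's: the same decomposition $\lambda_1 f_u + \lambda_2 g_u$, with Proposition~\ref{proposition:fu2} covering the first term and the facility-location term handled coordinate-wise (the paper merely cites its submodularity; your explicit marginal-gain argument for $h_i(S)=\max_{s\in S}c_{is}$ is a harmless strengthening, and it is sound because the $c_{is}$ are non-negative). Your view of $\Pi_l$ as a weighted maximum of singleton scores, with submodularity via the marginal gain $\max\bigl(0,\,a_v-\max_{s\in S}a_s\bigr)$ being nonincreasing in $S$, is also the same argument as the paper's two-case analysis, just phrased differently.

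The genuine flaw is in your proposed handling of the ``delicate point'' you raise yourself. The structural lemma you plan to rely on---that $F(S)=\max_{s\in S}a_s$ with the convention $F(\emptyset)=0$ is monotone submodular for \emph{arbitrary} real weights---is false. Take $a_{v_1}=-5$ and $a_{v_2}=-1$. Then $F(\{v_1\})=-5<0=F(\emptyset)$, so monotonicity fails at the first insertion, and
\begin{align*}
F(\{v_2\})-F(\emptyset)=-1 \;<\; 4=F(\{v_1,v_2\})-F(\{v_1\}),
\end{align*}
so the diminishing-returns inequality fails with $S=\emptyset$. The root cause is that the marginal-gain identity $F(S+v)-F(S)=\max\bigl(0,\,a_v-F(S)\bigr)$, which your argument rests on, does not hold at $S=\emptyset$ when $a_v<0$: there the gain is $a_v$, not $\max(0,a_v)$. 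The paper avoids this precisely by asserting $h(s)\ge 0$ for every $s$; that non-negativity is not a dispensable convenience but exactly the hypothesis that rescues the base case. Your instinct that the scores can be negative is legitimate---nothing prevents $\nu_s/\sqrt{2}$ in Eq.~\eqref{eq:nus} from exceeding $\min(\kappa_s,\mu)$, so this is in fact a soft spot in the paper's own proof---but the repair is not to generalize to arbitrary weights. Instead, truncate: replace each score $a_s$ by $\max(0,a_s)$. Since $\max_{s\in S}\max(0,a_s)=\max\bigl(0,\max_{s\in S}a_s\bigr)$, the truncated function is a weighted maximum with non-negative weights (monotone submodular by your own lemma, now correctly applied), and it remains a valid lower bound on $\Pi(S)$ because $\Pi(S)\ge 0$. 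With that single change your proof closes the gap.
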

\begin{proof}
Consider the upper bound $\Pi_{u}$. We have
\[
\Pi_u(S) =
 \lambda_1 \underbrace{\min  \bigg(   \mu,  \sum_{s\in\mathcal{S}}\kappa_s\bigg)}_{f_u(S)} +
\lambda_2 \underbrace{\sum_{i\in[N]} \max_{s\in S} c_{is}}_{g_u(S)}
\]
According to Proposition \ref{proposition:fu2},  $f_{u}(S)$ is monotone and submodular. Function $g_u(S)$ is the Facility Location objective function which we already know is monotone and submodular. $\Pi_u$ is a linear combination of two monotone submodular functions. Therefore, it is monotone submodular. 

Now, consider the lower bound $\Pi_l$. We have
$\Pi_l(S) = \max_{s\in S} h(s)$
where
\[
h(s) \triangleq  
\lambda_1 \bigg(\min  (\kappa_s,  \mu   ) -  \frac{\nu_s}{\sqrt{2}}   \bigg) \nonumber\\
+ \lambda_2 \sum_{i \in [N]} c_{is}.
\]
Note that $h(s) \ge 0$ for every $s$ because the first term of $h$ is always non-negative.
Function $\Pi_l(S)$ is monotone due to the following inequality for every  $v \in [M]$:
\[
\Pi_l(S) = \max_{s\in S} h(S) \le \max\bigg(h(v), \max_{s\in S} h(s)\bigg) = \Pi_l(S+v). 
\]
The submodularity condition is, for every $v_1, v_2 \in [M]$, 
\[
 \Pi_l(S+v_1+v_2) -  \Pi_l(S+v_1) \le  \Pi_l(S+v_2) -  \Pi_l(S).
 \]
 Let $m \triangleq \Pi_l(S)$. This condition is equivalent to
\begin{align*}
\underbrace{\max(h(v_1), h(v_2), m)}_{A} - \underbrace{\max(h(v_1), m)}_{B} 
\le \underbrace{\max(h(v_2), m)}_{C} - m.
\end{align*}
There are two cases: 
\begin{enumerate}
\item $m \ge h(v_2)$: $A-B = \max(h(v_1),m)-\max(h(v_1), m) = 0 \le C-m$.
\item $m < h(v_2)$: we have
\begin{align*}
&A-B\\ 
&= \max(h(v_1), h(v_2)) - \max(h(v_1), m) \\
&= h(v_1)+h(v_2)-\min(h(v_1), h(v_2)) \\
&~~~- (h(v_1)+m - \min(h(v_1),m))\\
&= h(v_2)-m - \underbrace{(\min(h(v_1), h(v_2))-\min(h(v_1),m))}_{\ge 0}\\
&\le h(v_2)-m = C-m.
\end{align*}
\end{enumerate}
Thus, the submodularity condition is always true for $\Pi_l$.
\end{proof}
 
 \subsection{The Proposed Algorithm}
Because our objective function $\Pi$ is bounded between two monotone submodular functions, $\Pi_l$ and $\Pi_u$, we apply  sandwich approximation  to obtain our approximate solution:
\begin{align}
&S_l = \mathsf{GREEDY}(  \Pi_l)\label{eq:Sl}\\
&S_u = \mathsf{GREEDY}( \Pi_u)\label{eq:Su}\\
&S^*= \arg \underset{S \in \{S_l, S_u\}}{\max}   \{ \Pi(S_l), \Pi(S_u) \}.
\label{eq:Sstar}
\end{align}
Here, $\mathsf{GREEDY}$ is the standard greedy method for submodular maximization presented earlier in Section \ref{sec:submodular}. Functions $\Pi_l(S)$ and   $\Pi_u(S)$ are easily evaluated given $S$, and so it is a straightforward implementation of  $\mathsf{GREEDY}$ to find $S_l$ (Eq \eqref{eq:Sl}) and $S_u$ (Eq \eqref{eq:Su}). 

However,   two challenges remain. The first is due to the need to evaluate the expectation term $\mathbb{E}[f(S, z)]$ of $\Omega(S,z)$ in $\Pi(S)$. If we approximate this expectation by sampling the workload variables $W_i$ and capacity variables $K_s$, the number of  times   needed to solve the above assignment optimization would multiple more. We cannot afford this in practice. Instead, to evaluate $\Pi(S)$ we propose to approximate the term $\mathbb{E}[f(S, z)]$ such that random variables $K_s$ and $W_i$ are approximated by their means, $\kappa_s$ and $\mu_i$, respectively:
\begin{align}
&\mathbb{E}[f(S,z)] = \hat{f}(S,z) \triangleq
\sum_{s \in S} \min (  \kappa_s,  \sum_{i\in[N]} z_{is} \mu_i   ).
\label{eq:Ef}
 \end{align}
Secondly, it is NP-hard to find the final solution $S^*$ in Eq \eqref{eq:Sstar}. This is because we need to evaluate function $\Pi$,
\[
\Pi(S) = \Omega(S, z^*), \text{where~} z^* = \zeta(S) = \arg \max_z \Omega(S, z),
\]
which requires solving the inner optimization to find the best assignment $z^*$ for the servers in $S$; this optimization  is NP-hard. 
To avoid this, we propose an iterative assignment optimization algorithm as follows.

 In each iterative step $t$ of applying $\mathsf{GREEDY}$, we keep track of the best assignment $z^{(t)}$ corresponding to $S^{(t)}$, which is found in the previous iteration. Consider iteration $(t+1)$  and candidate server $v$. To find the best assignment for the would-be augmented set $S^{(t)}+v$, we start with the assignment $z = z^{(t)}$ and conduct a series of greedy steps of reassign a cell to the new server $v$ to continually improve the objective function. 
 
Let $\mathcal{W}_s$ denote the current workload put on server $s$. For $v$, this workload is zero initially. Then we will repeatedly find a cell $j$ to reassign from its current server $u$ to server $v$ if maximizing the following gain (the difference between the efficiency of the assignment before and after):
\[
j = \arg \max_i  ~\bigg\{ \Delta(i)  = \lambda_1 \Delta_f(i) + \lambda_2 \Delta_g(i)  \bigg\}
\]
where
 \begin{align}
  \Delta_f(i) &=  \underbrace{\mathbb{E}[\min(K_v, \mathcal{W}_v + W_i)] +  \mathbb{E}[\min(K_u, \mathcal{W}_u - W_i)]}_{after} \nonumber\\
  &~~-  \underbrace{\mathbb{E}[\min(K_u,\mathcal{W}_u)]}_{before} \label{eq:deltaf}
  \end{align}
  and 
 $\Delta_g(i) = \underbrace{c_{iv}}_{after}-\underbrace{c_{iu}}_{before}$.
 Note that once this cell $j$ is found,  the terms $\mathcal{W}_u, \mathcal{W}_v$ in formula $\Delta_f$ are updated  according to 
 \[
  \mathcal{W}_u \leftarrow \mathcal{W}_u- W_j, ~ \mathcal{W}_v \leftarrow \mathcal{W}_v+ W_j.
  \]
We go on to reassign the next cell repeatedly until there is no more gain. So, at the end, for each candidate $v$ we would have a corresponding would-be deployment $(S^{(t)}+v, z(v))$.  The detailed algorithm for computing the best assignment if we add a candidate server is listed in Algorithm \ref{alg:assign}. Overall, the complete  algorithm for solving our SDU problem is listed in Algorithm \ref{alg:SDU}.

\SetKwComment{Comment}{/* }{ */}
\begin{algorithm}[t]
\footnotesize
\caption{$\mathsf{ASSIGN}$}
\label{alg:assign}
\Comment{Given   current deployment $(S,z)$, if we augment it with a new server $v$, find   best assignment $z^{*}$ for   augmented set $(S+v)$.}
\KwIn{$S, z, v$}
\KwOut{$z^{*}$}
\Begin{
		$I \gets \emptyset$\\
		$\Delta \gets 0$\\
		\While{$(I \neq [N])$}{
			$i \gets \arg \max \{ \Delta(j): {j \in [N]\setminus I}\}$\\ 
			\If{$(\Delta(i) \le 0)$}{
				\textbf{break}
			}
			$I \gets I+i$\\
			$\Delta \gets \Delta + \Delta(i)$ \\
		}

	$z^{*} \gets z$\\
	\If{$(\Delta > 0)$} { 
		\For{$(i \in I)$}{
			$z^{*}_{iv} \gets 1$\\
			$z^{*}_{is} \gets 0$ for the server $s$ where $z_{is} =1$
		}
	}
}
\end{algorithm}

\begin{algorithm}[t]
\footnotesize
\caption{Greedy algorithm for $\mathsf{SDU}$}
\label{alg:SDU}
\Comment{Assumption: input parameters $\mu_i$, $\kappa_s$, and $\nu_s$ are obtained from historical data}
 \KwIn{\\
 \Indp 
$[M]$: set of server candidates\\
$[N]$: set of cells\\
$k$: number of servers to deploy\\
 $\mu_i$: mean workload of each cell $i\in [N]$\\
$\kappa_s$: mean capacity of each server  $s \in [M]$\\
$\nu_s$: standard deviation of the difference between server $s$'s capacity and the total cell workload\\
 }
\KwOut{$(S^*, z^*) = \arg \max_{S,z} \Omega(S, z)$}
\Begin{
	$S_l, S_u, S \gets \emptyset$\\
	$z_l, z_u, z \gets 0$\\
	\For{$(t = 1 \rightarrow k)$} {
		$v_l \gets \arg \max \{ \Pi_l(S_l+v): v \in [M] \setminus S_l \} $\\
		$S_l \gets S_l + v_l$\\
		$z_l \gets \mathsf{ASSIGN}(S_l, z_l,  v_l)$
	}	
	\For{$(t = 1 \rightarrow k)$} {
		$v_u \gets \arg \max \{  \Pi_u(S_u+v): v \in [M] \setminus S_u\}$\\
		$S_u \gets S_u + v_u$\\
		$z_u \gets \mathsf{ASSIGN}(S_u, z_u,  v_u)$
	}

	$\hat{\Pi}_l \gets \hat{\Omega}(S_l, z_l)$, $\hat{\Pi}_u \gets \hat{\Omega}(S_u, z_u)$\\
	
	\If{$(\hat{\Pi}_l > \hat{\Pi}_u)$} {\Return $(S_l, z_l)$}
	\Return $(S_u, z_u)$
}
\end{algorithm}

 \textbf{Time complexity. } The time to run the $\mathsf{ASSIGN}$ algorithm (Algorithm \ref{alg:assign}) is $O(N^2)$ time because the time for line 5 to compute $\Delta$ is $O(1)$ and to find $i$ is $O(N)$. 
The time is  $O(|S| + N)$ to evaluate $\Pi_u(S)$ and $O(|S|N)$ to evaluate $\Pi_l(S)$. 
In the main algorithm, Algorithm \ref{alg:SDU}: 1) the time to execute line 5 totaling for all $k$ iterations is $O(MNk^2)$ and the time to execute line 7 totaling for all $k$ iterations is $O(kN^2)$. Therefore, the total time for lines 4-8 is $O(Nk(N+kM))$. This is also the time complexity for lines 9-13. Lines 14-21 take less time. Overall, the time complexity to compute the final server placement solution is $O(Nk(N+kM))$.

\textbf{Optimality analysis. } Thanks to the sandwiching of the objective function $\Pi(S)$ between submodular functions $\Pi_l(S)$ and $\Pi_u(S)$ and the $(1-1/e)$-bound of the standard greedy algorithm for submodular maximization, the near-optimality of our proposed algorithm is guaranteed accordingly. However, in our algorithm,  we approximate $\Pi(S_l)$ and $\Pi(S_u)$  with $\Omega(S_l, z_l)$ and $\Omega(S_u, z_u)$, respectively.  This approximation of course affects the theoretical bound. Therefore, our bound remains a heuristic. This is fair because approximation of $\Pi$ is a must due to the NP-hardness of  evaluating $\Pi$.

\section{Evaluation Study}\label{sec:evaluation} 
We conducted an  evaluation using a large cellular traffic dataset made available by  Chen et al. \cite{Modeling15-Chen}. This dataset provides hourly HTTP traffic statistics during 8 days for a cellular network of $N=13,296$ cells (base-stations) serving a medium-size city of China (50km x 60km).

\subsection{Setup}

The cellular network is geographically grid-partitioned into 16x16 = 256 regions. Some of these regions do not contain any cell.  Only 218 regions are non-empty. Therefore, we set the number of candidate servers to $M=218$ each located at the center of a non-empty region. We consider three cases of deploying the actual servers, $k \in \{10, 20, 30\}$.  
In the dataset, the position of a cell is a normalized longitude/latitude position which we convert  to a planar x/y position.  We use the Euclidean distance  between a cell $i$ and a server $s$ to represent their communication cost $d_{is}$ (normalized to $d_{is} \in [0, 1]$).

\begin{figure}[t]
\begin{center}
\subfigure[ Mean $\mu_i$]{\includegraphics[width=0.24\textwidth]{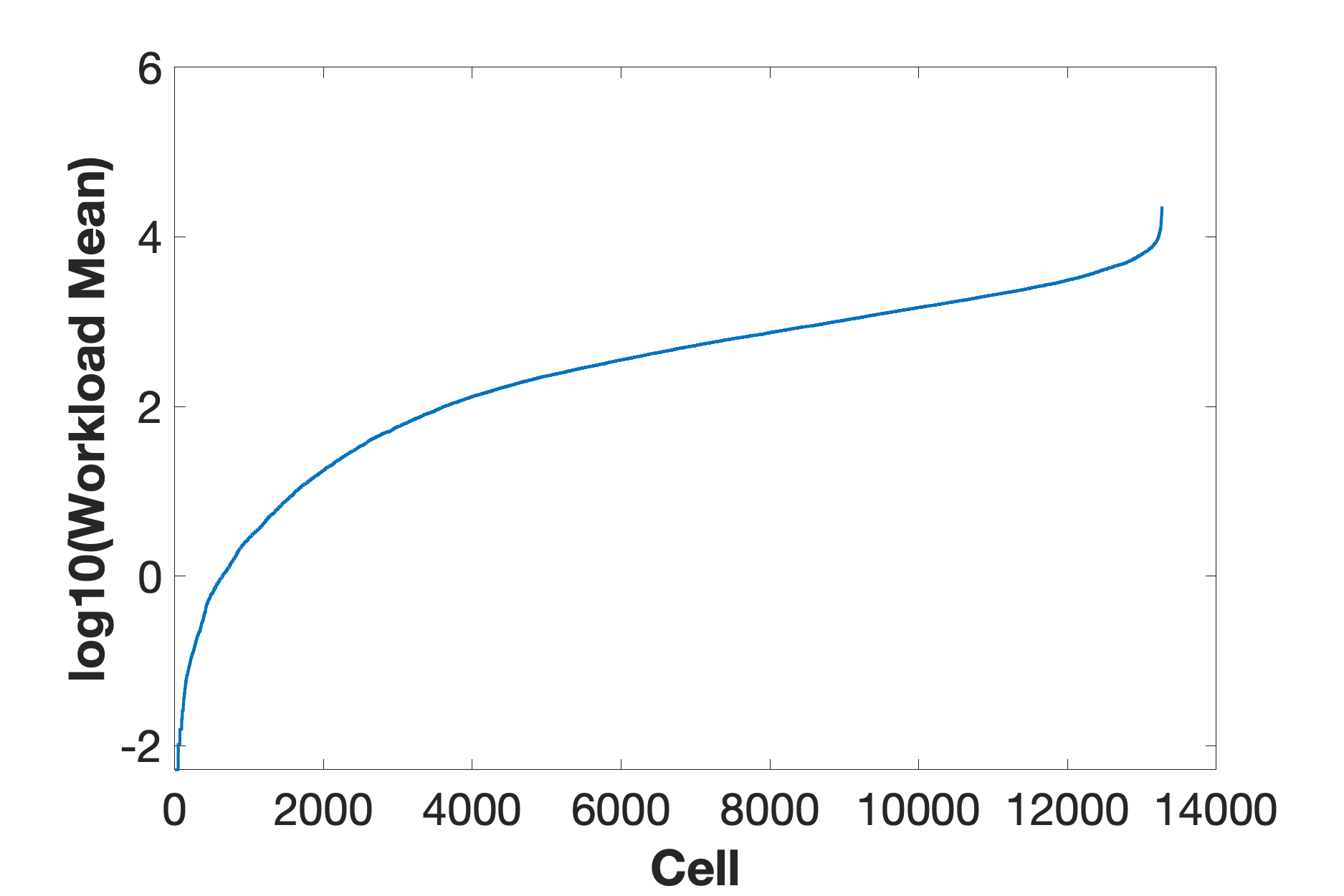}\label{vertex_weight_mean.jpg}}
\subfigure[ Coefficient of Variation $\frac{\sigma_i}{\mu_i}$]{\includegraphics[width=0.24\textwidth]{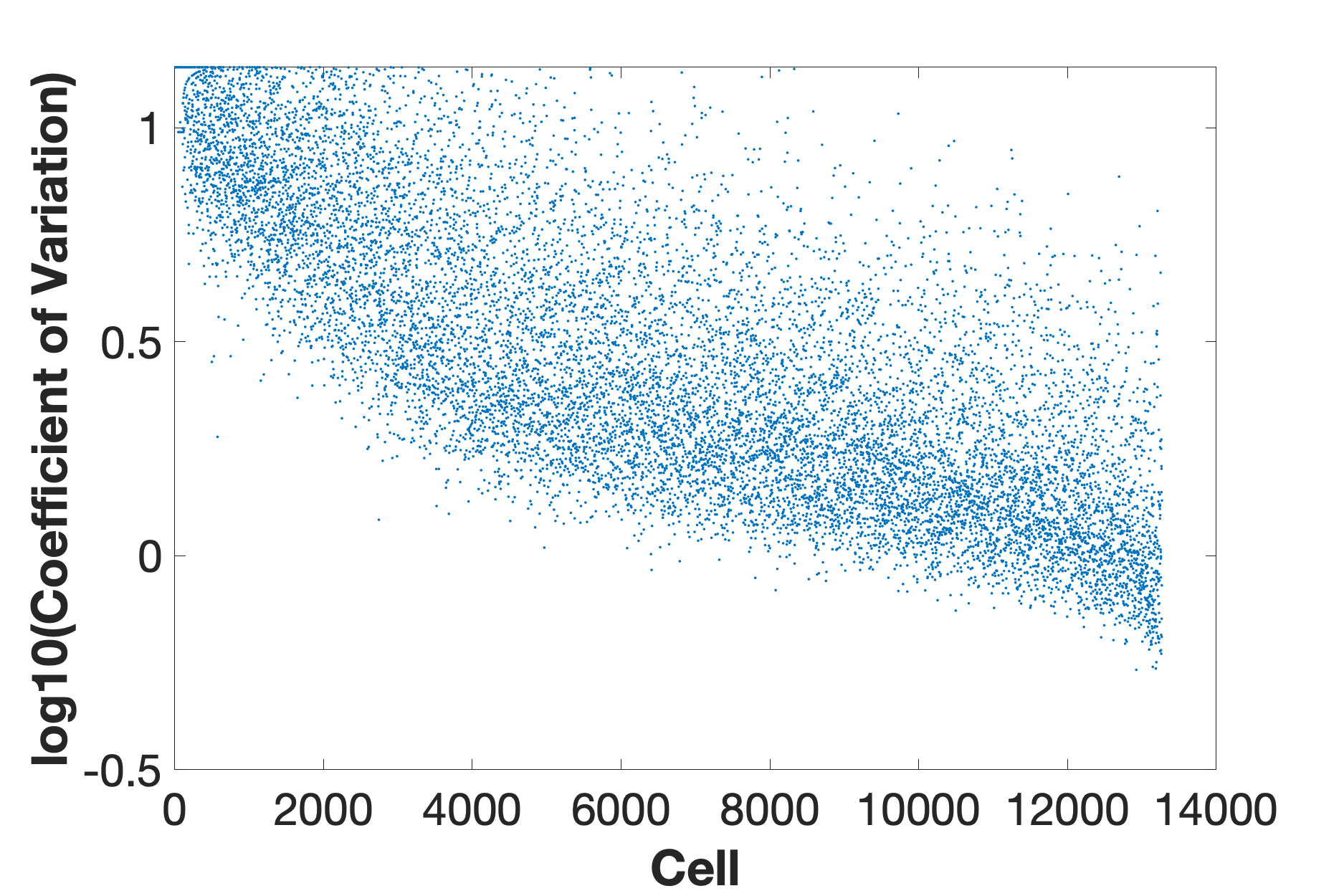}\label{vertex_weight_cv.jpg}}
\caption{Distribution of cell workload  over the time. Plots are in log-linear scale, where  the x-axis represents the cells in the increasing order of workload mean. (a) workload mean varies widely from cell to cell; (b) coefficient of variation varies from 0.8 (for ``busiest" cells) to 10 (for ``lightest" cells).} 
 \label{fig:vertexweight1}
 \end{center}
\end{figure}

\textbf{Cell workload. }
 The time-varying workload $W_i$ of each cell $i$ is   the number of transferred packets involving cell $i$. We collected this count at every hour during the entire 8-day period. There are thus 192  samples (8 days $\times$ 24 hours/day)  for each cell; these represent 192 times there is a workload demand from the cell. The empirical mean and variance of these samples for cell $i$ serve  as the mean $\mu_i$ and variance $\sigma_i^2$ for   $W_i$. Figure \ref{fig:vertexweight1} shows   mean $\mu_i$ and   coefficient of variation (cv) $\sigma_i/\mu_i$ for each cell $i$'s workload, which provides evidence of wide variation not only for the workload demand of a single cell over the time but also across different cells.
 
 \textbf{Server capacity. }  We assume that the capacity of a server changes hourly. In each hourly period,  the capacity of server $s$ is a random variable $K_s \sim \mathcal{N}(\kappa_s, \gamma_s^2)$,
  where 
  $  \kappa_s  = \mathcal{U}(1,  \kappa) \cdot \frac{1}{k} \sum_{i \in [N]} \mu_i,~ 
  \gamma_s = \gamma \cdot \kappa_s$.
We consider $\kappa \in \{0.7, 1.3\}$ (the higher, the more edge capacity with respect to the total workload demand from the cells) and $\gamma \in \{0.1, 0.9\}$ (the higher, the more time-varying for the capacity). Intuitively, for example, when $\kappa=0.7$ (low capacity) and $\gamma=0.9$ (high capacity variation), we have a setting where each server can carry somewhat $70\%\pm$ of the averaged-over-$k$-server workload  and each server's capacity varies within $90\%\pm$ from its mean. Given samples for $K_s$ and $W_i$ for each hourly period, we can obtain the corresponding $\nu_s$ in Eq \eqref{eq:nus} for each server.

\subsection{Results}
We compare  our  proposed $\mathsf{SDU}$ algorithm  (Algorithm \ref{alg:SDU})  to three baseline algorithms:
1) $\mathsf{RAND}$: $k$ servers are chosen uniformly at random and assigned uniformly at random to the cells; 
2) $\mathsf{FACILITY}$: $k$ servers are chosen and assigned to the cells according to the Facility Location algorithm, where communication cost is considered a kind of distance. This algorithm prioritizes communication efficiency only;
and
3) $\mathsf{KNAPSACK}$: $k$ servers with highest capacity are chosen, then assigned to the cells according to a Multiple-Knapsack algorithm. Only  computing efficiency is considered. This algorithm does not work with uncertainty, and so to be applicable, we replace random variables $X_i$, $K_s$ with their mean values $\mu_i$,  $\kappa_s$, respectively.

These algorithms are compared in terms of empirical efficiency - measured as the empirical average of objective function $\Pi$ when back-tested on  actual workloads/capacities for each hour of the 8-day, 198-hour, period. 
It is noted that objective $\Pi$ is a linear combination of two objectives that have different units, computing unit versus communication unit. As such,  setting coefficient $\lambda$ to 0.5 does not mean giving equal priority to each objective. For a fair balancing of these units, we applied a normalization method as follow. First, we generate a series of uniform-random server deployment $(S,z)$ and compute 
$a = {f}(S,z),~b = g(S,z)$,
which are the corresponding computing efficiency and communication efficiency, for each of the 192 hourly periods. Then, 
 $a,b$ are averaged over all samples. Finally, we use the following adjusted formula for $\Pi$ instead of the original formula:
$\Pi_{adjusted}(S) = \frac{\lambda}{   {a}} \cdot \mathbb{E}[f(S,\zeta(S)) + \frac{1-\lambda}{  {b}} \cdot g(S,\zeta(S))$.
With this, $\lambda=1/2$ more or less represents an equal priority between the two objectives. We run with $\lambda \in \{0.2, 0.5, 0.8\}$.

\begin{figure*}[t]
\begin{center}
{\includegraphics[width=0.32\textwidth]{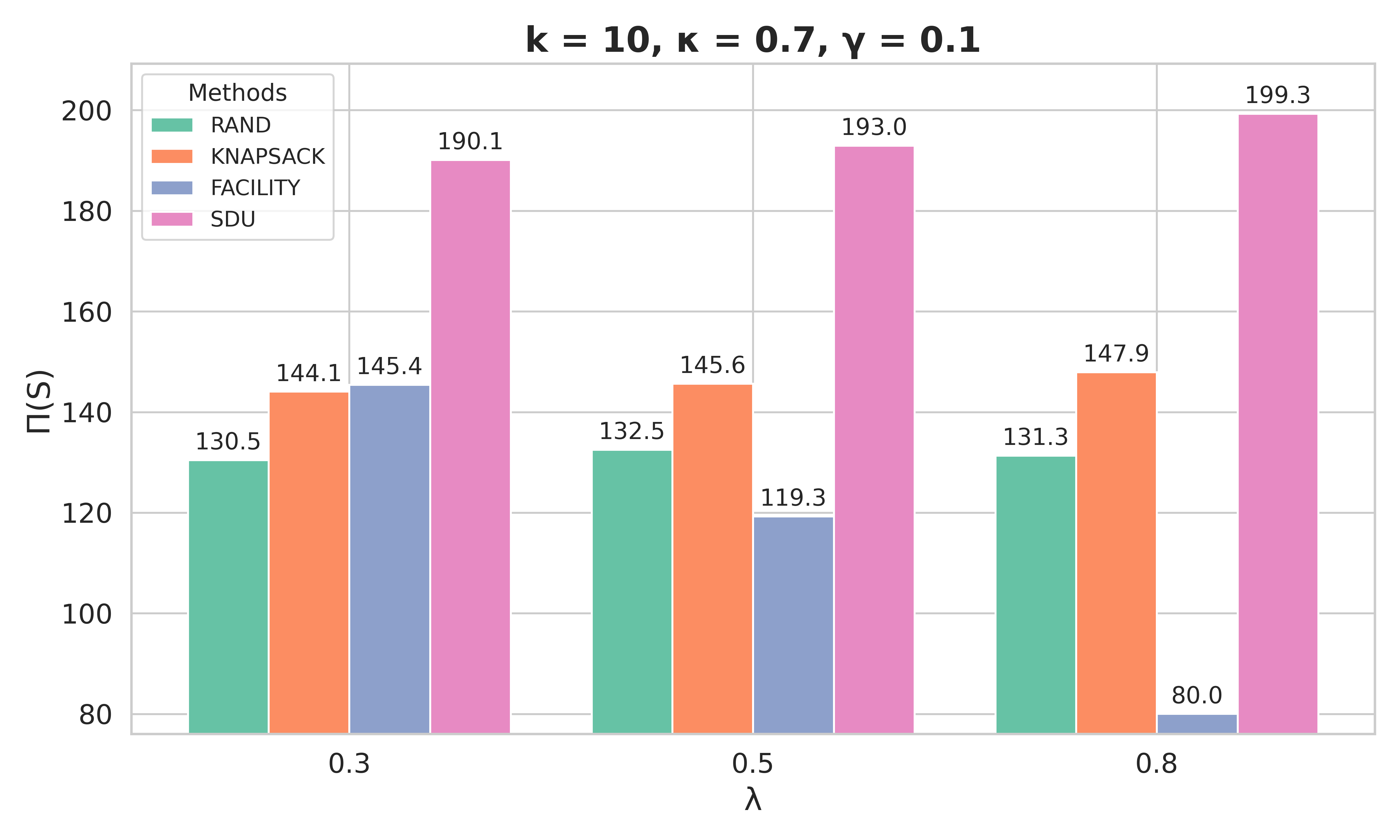}\label{fig:heatmap_2AM.png}}
{\includegraphics[width=0.32\textwidth]{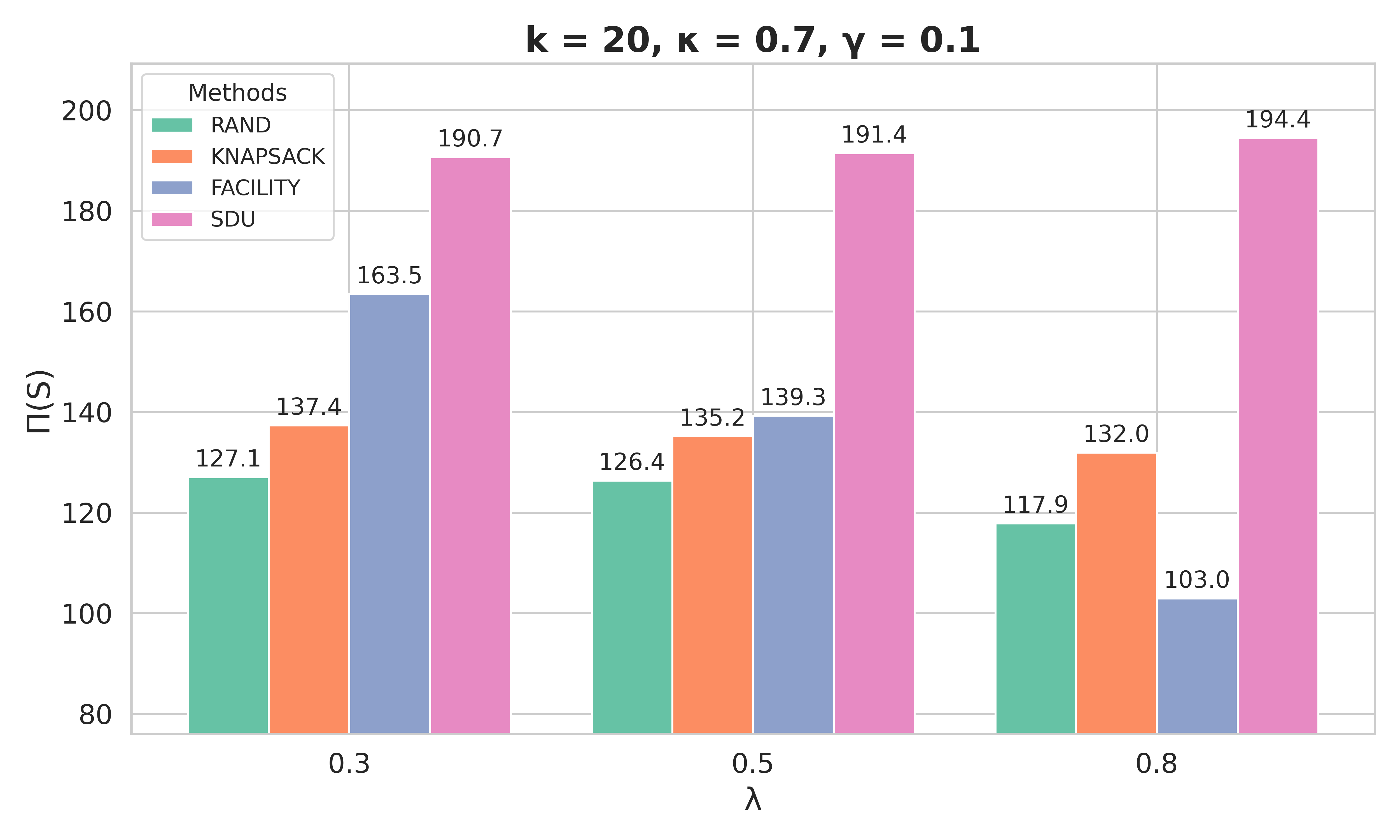}\label{fig:heatmap_2AM.png}}
{\includegraphics[width=0.32\textwidth]{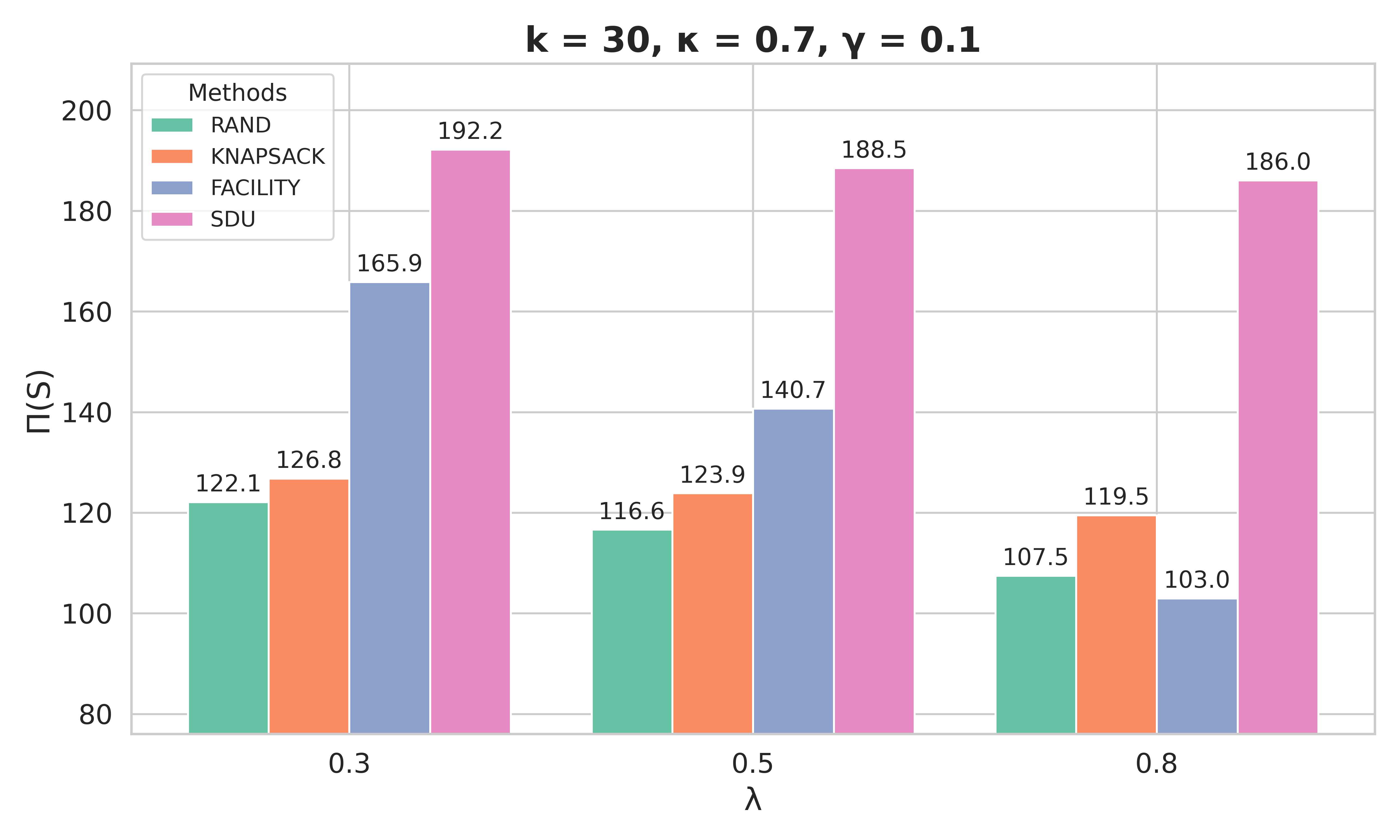}\label{fig:heatmap_2AM.png}}
{\includegraphics[width=0.32\textwidth]{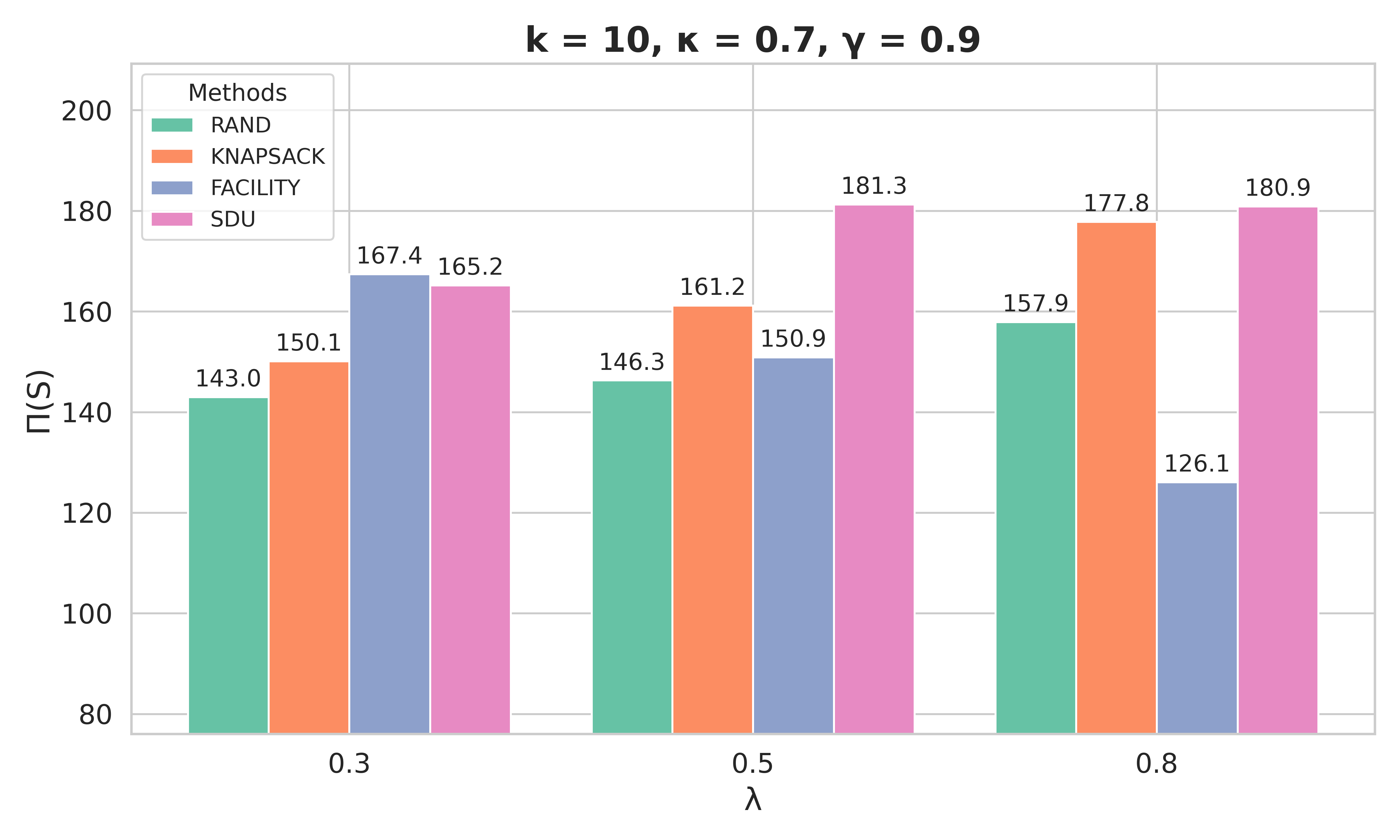}\label{fig:heatmap_4AM.png}}
{\includegraphics[width=0.32\textwidth]{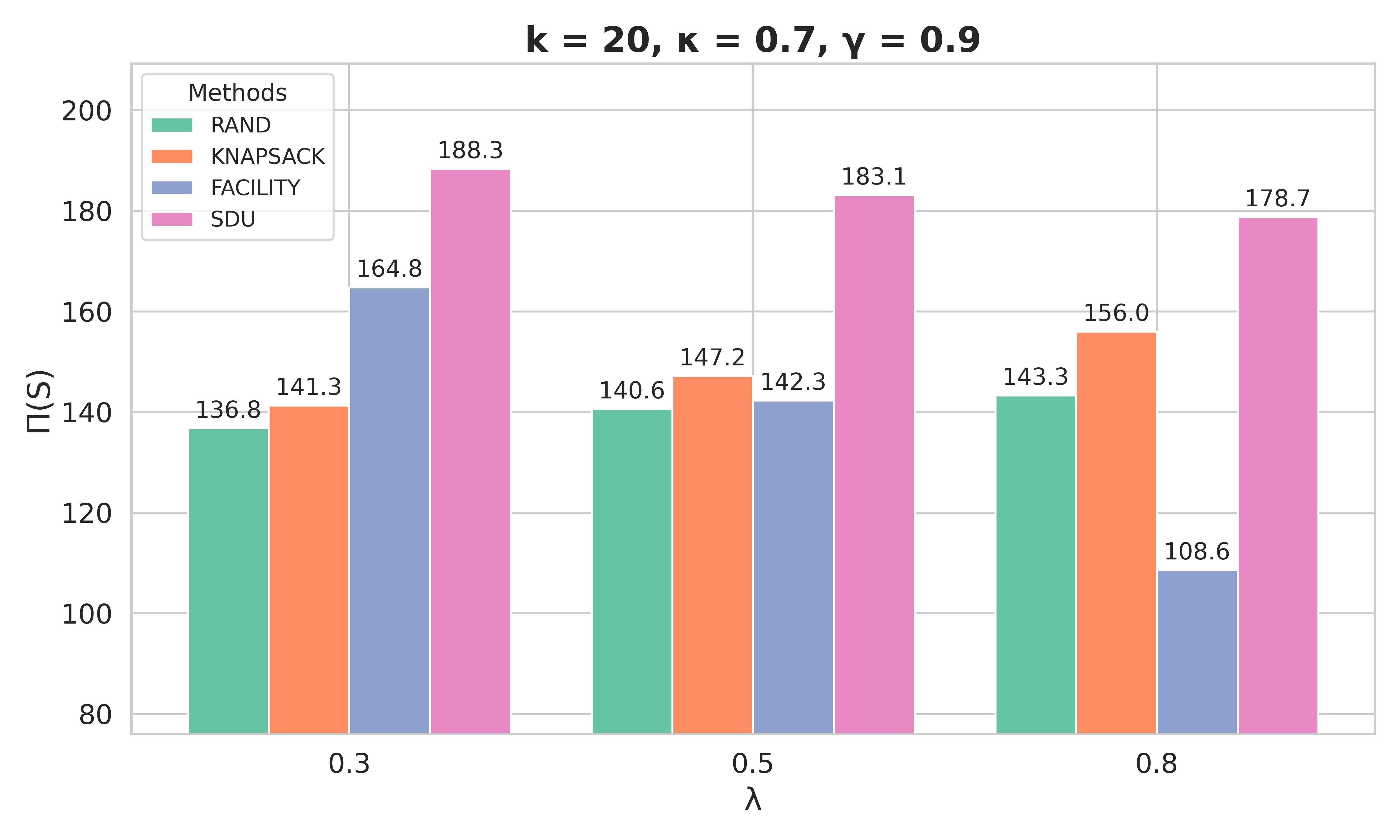}\label{fig:heatmap_4AM.png}}
{\includegraphics[width=0.32\textwidth]{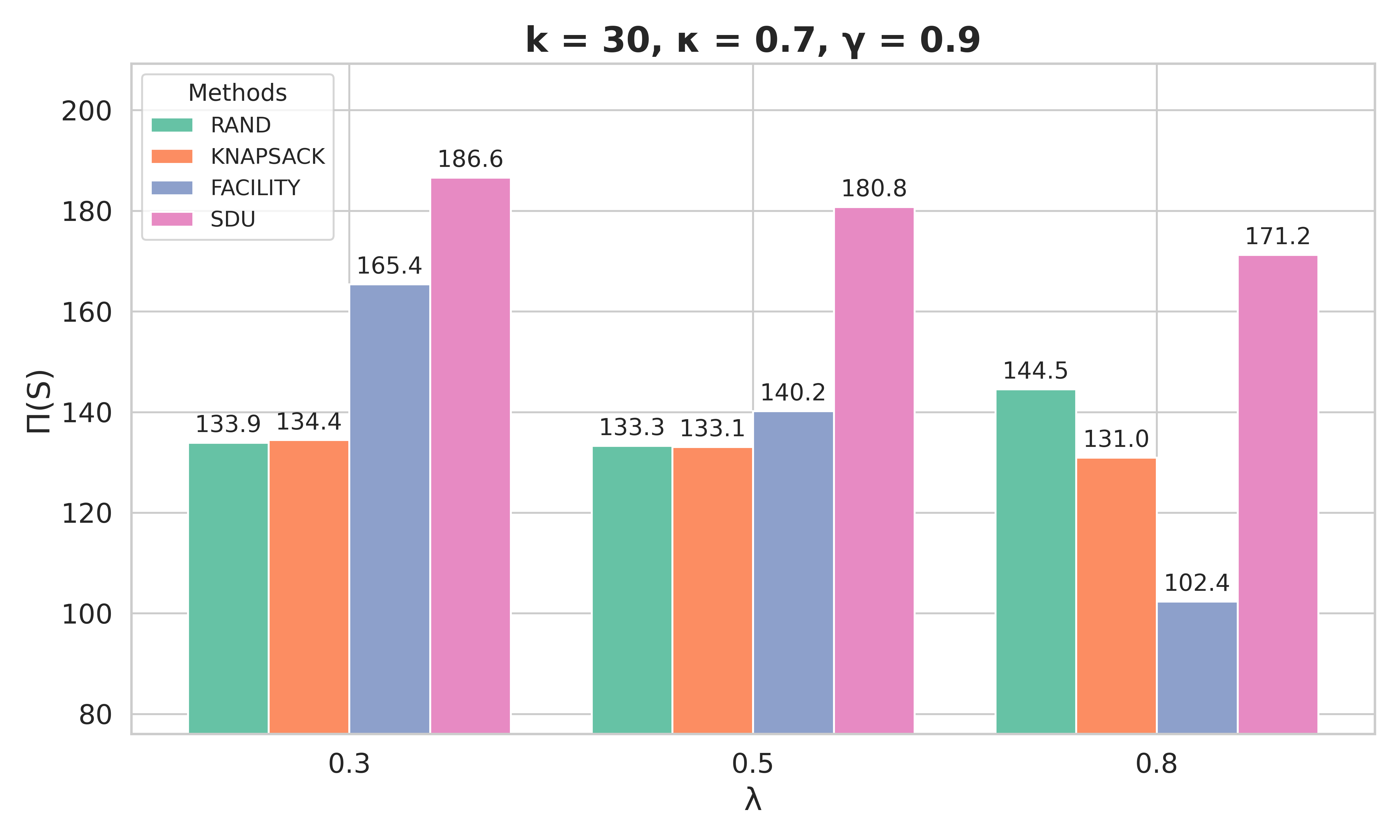}\label{fig:heatmap_4AM.png}}
\caption{Less edge server capacity ($\kappa=0.7$): comparison of $\mathsf{SDU}$ to other benchmark algorithms under various types of workload-capacity uncertainty.} 
 \label{fig:efficiency_lowedgecapacity}
 \end{center}
\end{figure*}

\begin{figure*}[t]
\begin{center}
\subfigure[Less edge server capacity: $\kappa=0.7$]{\includegraphics[width=0.48\textwidth]{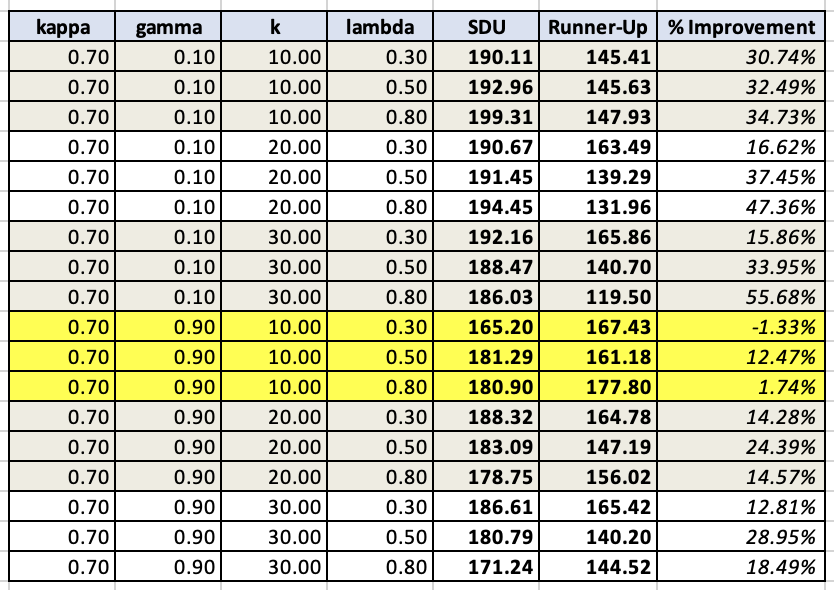}\label{fig:heatmap_2AM.png}}
\subfigure[More edge server capacity: $\kappa=1.3$]{\includegraphics[width=0.48\textwidth]{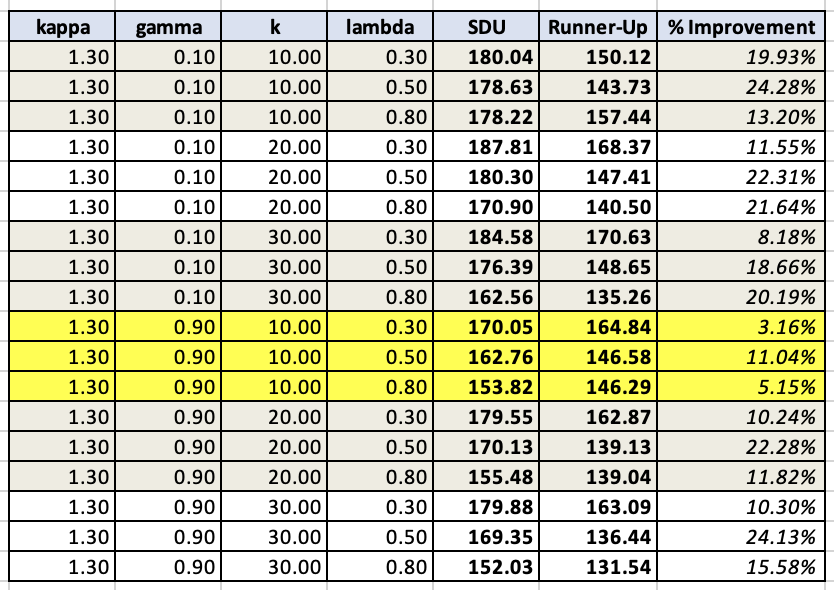}\label{fig:heatmap_2AM.png}}
\caption{Percentage improvement of $\mathsf{SDU}$ compared to the second-best algorithm.} 
 \label{fig:improvement}
 \end{center}
\end{figure*}

First, let us focus on the case where the edge server capacity is less ($\kappa=0.7$). The efficiency results are plotted in Figure \ref{fig:efficiency_lowedgecapacity}.  As expected, 1) $\mathsf{KNAPSACK}$ tends to be more efficient when $\lambda$ increases, i.e., prioritizing computing efficiency over communication efficiency; 2) $\mathsf{FACILITY}$ is the opposite direction;   3) $\mathsf{RAND}$ is agonistic to $\lambda$, which is explainable because this method is randomized regardless of any priority; and 4) most noticeably, however, the proposed algorithm, $\mathsf{SDU}$, is not only the best, but also outperforms all the other algorithms by substantial margins. Similar observations hold true for the case where the edge server capacity is less ($\kappa=1.3$), whose efficiency results are plotted in Figure \ref{fig:efficiency_highedgecapacity}.

Figure \ref{fig:improvement} shows the degree (percentage) of improvement when  $\mathsf{SDU}$ is compared to the second-best algorithm - which is $\mathsf{KNAPSACK}$ in the case of prioritizing computing efficiency (higher $\lambda$) or $\mathsf{FACILITY}$ in the case of prioritizing communication efficiency (smaller $\lambda$). The improvement can be as high as 55.68\% (when $\kappa=0.7, \gamma=0.1, k=30, \lambda=0.8$). On average over 36 settings, the improvement is 19.58\%.  The improvement tends to increase when the edge deploys less capacity, i.e., smaller $\kappa$.  
The least improvement is observed when the edge deploys fewest servers ($k=10$) that have highly-varying capacity ($\gamma=0.9$); this is highlighted with yellow color in Figure \ref{fig:improvement}. In this constrained setting, there is one outlier case, $\kappa=0.7, \gamma=0.9, k=10, \lambda=0.3$, where the proposed algorithm is slightly less efficient than $\mathsf{FACILITY}$, which is only by 1.33\%,
hence negligible. Overall, it is clear that $\mathsf{SDU}$ is far superior to all the benchmark algorithms.

\begin{figure*}[t]
\begin{center}
{\includegraphics[width=0.32\textwidth]{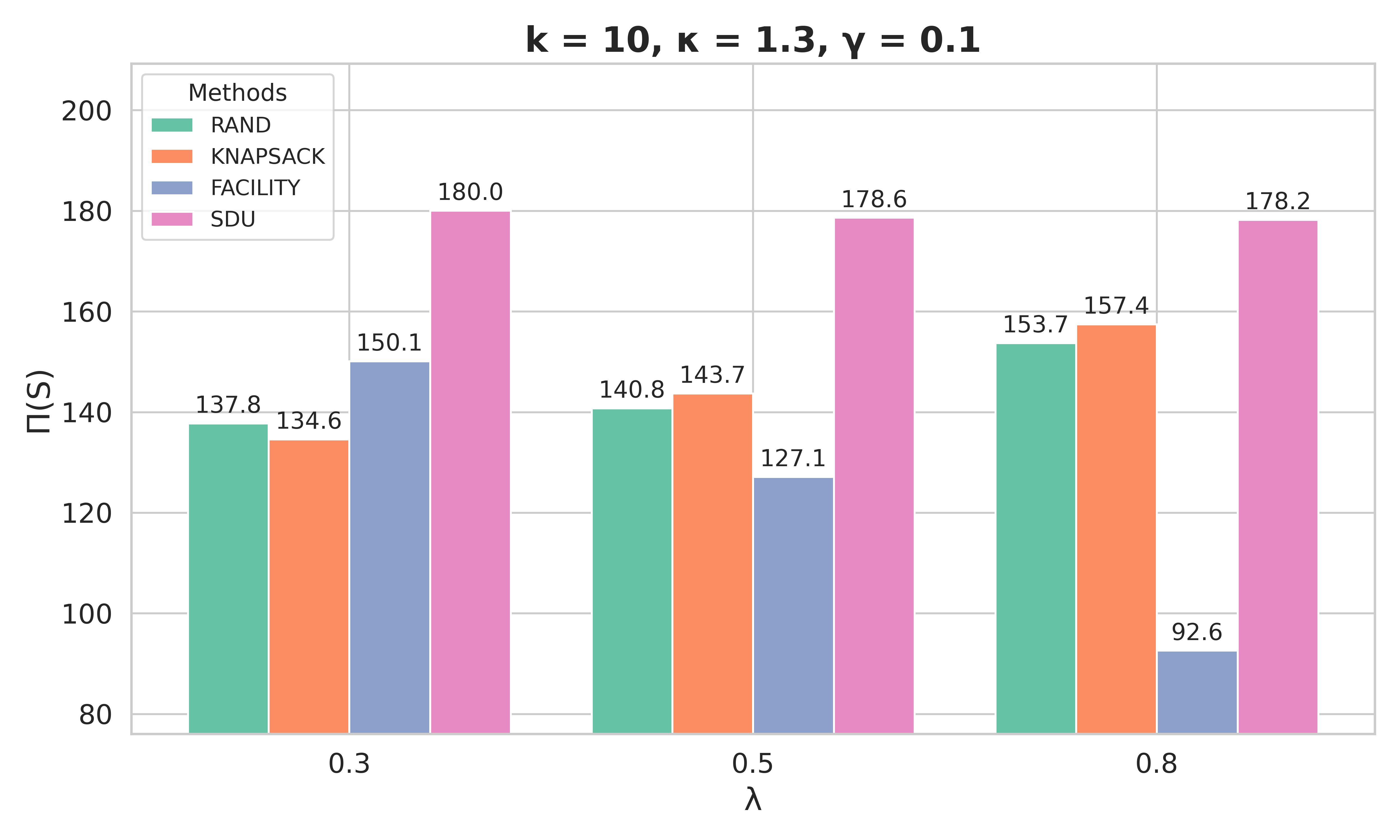}\label{fig:heatmap_2AM.png}}
{\includegraphics[width=0.32\textwidth]{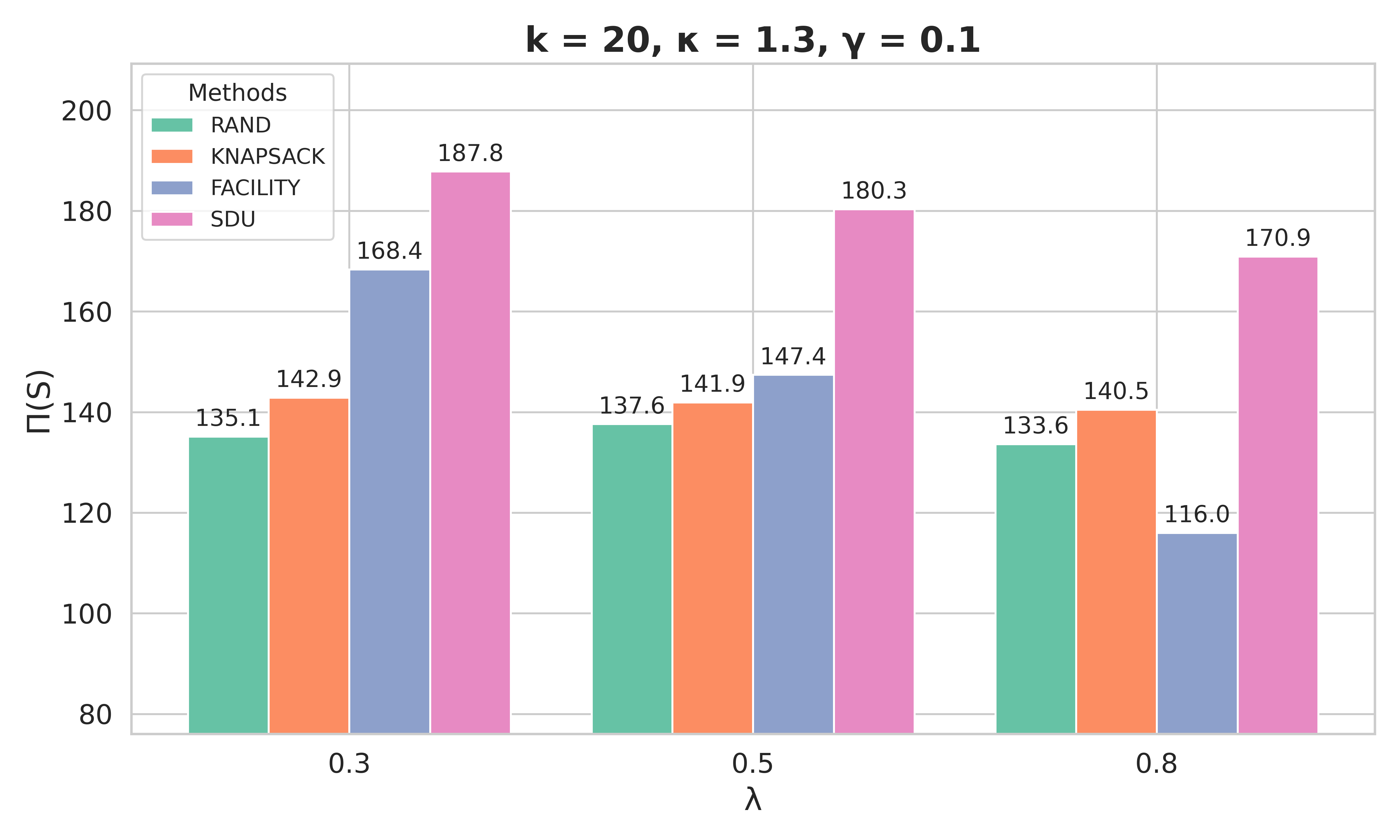}\label{fig:heatmap_2AM.png}}
{\includegraphics[width=0.32\textwidth]{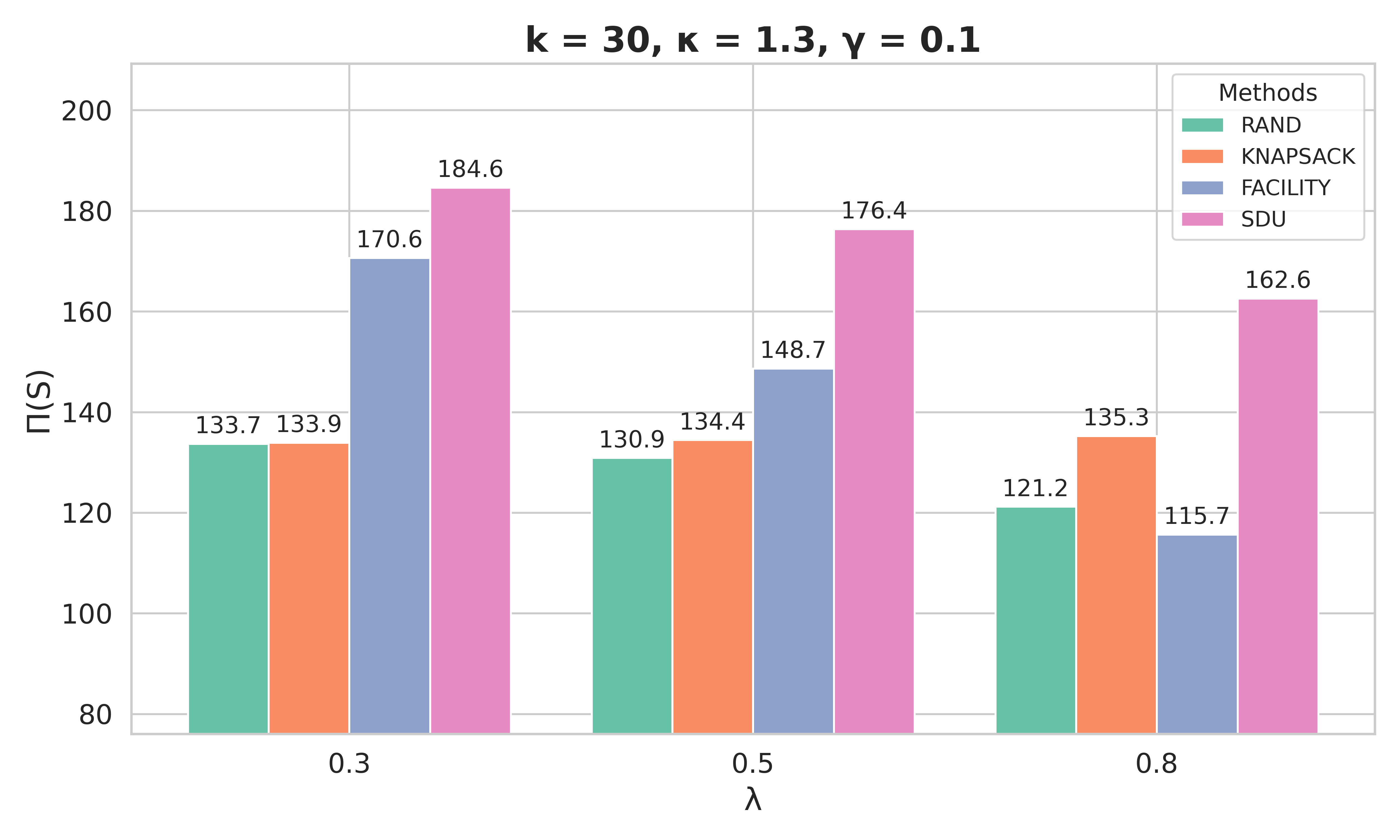}\label{fig:heatmap_2AM.png}}
{\includegraphics[width=0.32\textwidth]{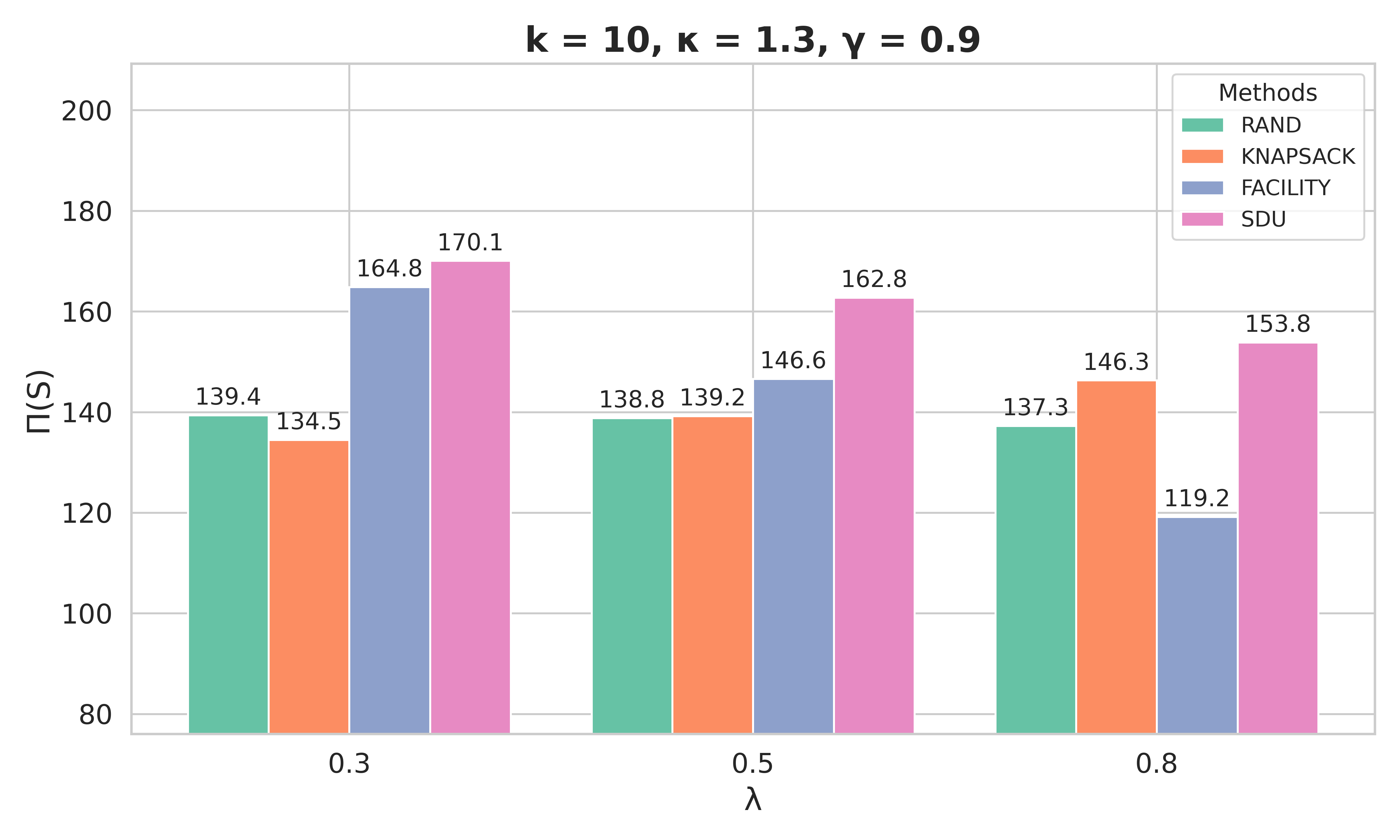}\label{fig:heatmap_4AM.png}}
{\includegraphics[width=0.32\textwidth]{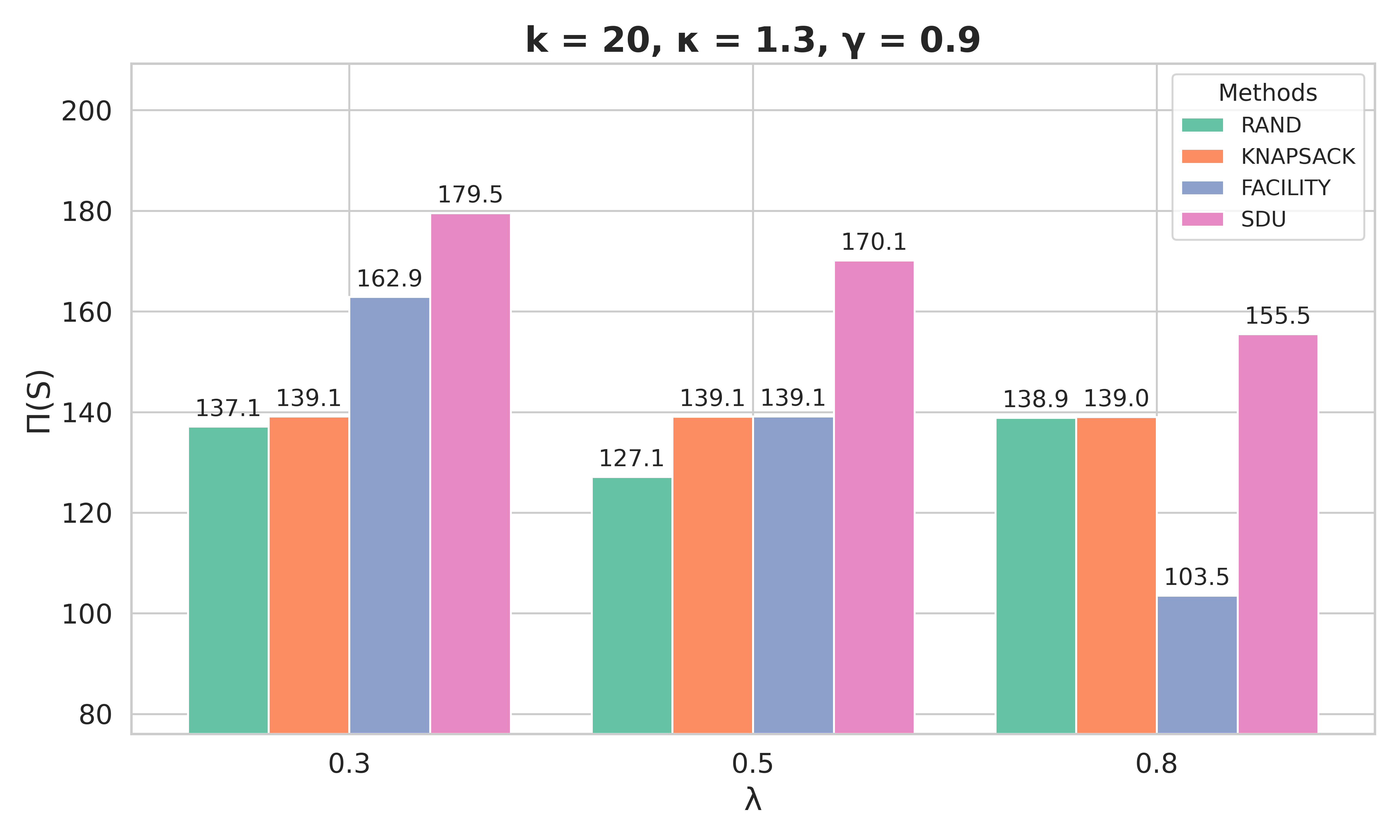}\label{fig:heatmap_4AM.png}}
{\includegraphics[width=0.32\textwidth]{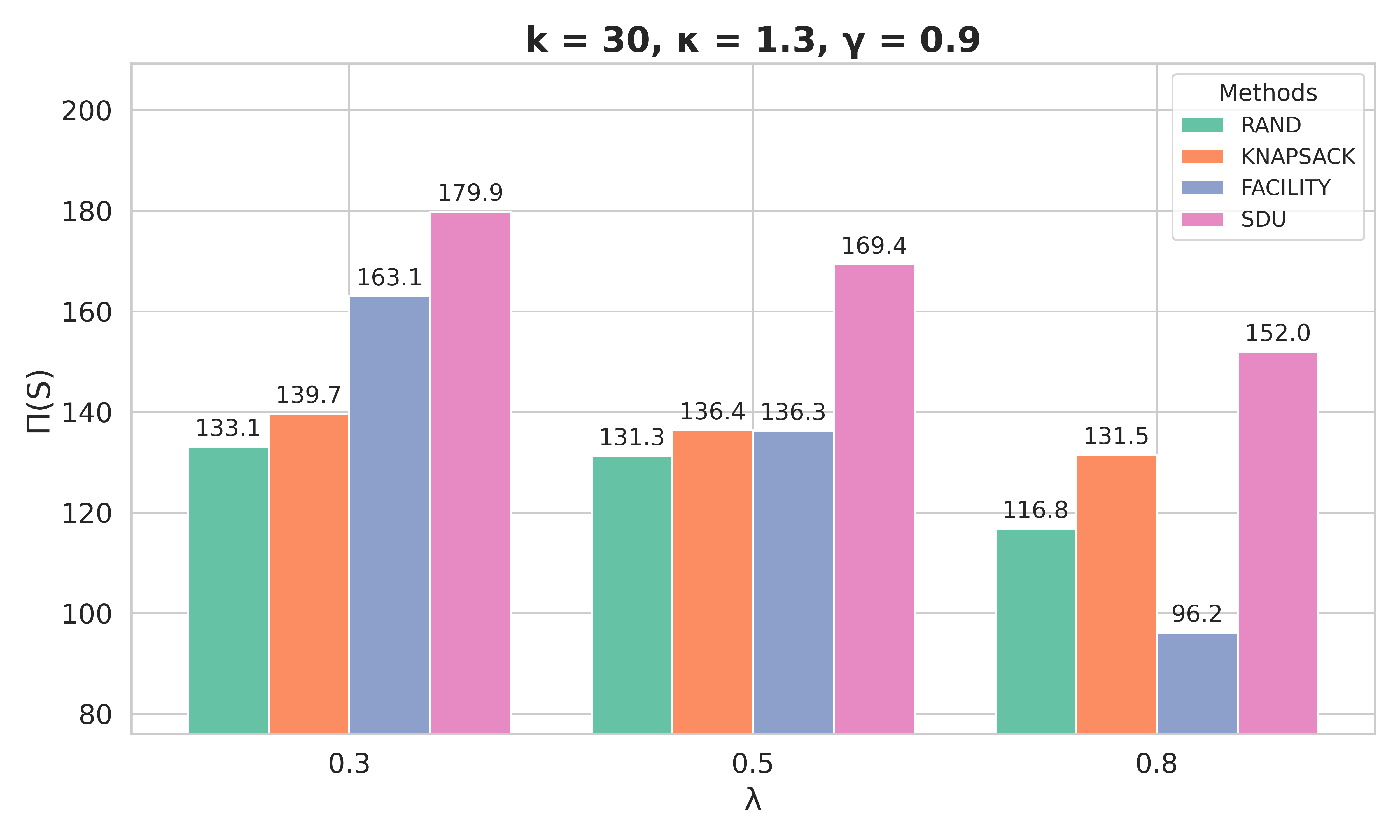}\label{fig:heatmap_4AM.png}}
\caption{More edge server capacity ($\kappa=1.3$): comparison of $\mathsf{SDU}$ to other benchmark algorithms under various types of workload-capacity uncertainty.} 
 \label{fig:efficiency_highedgecapacity}
 \end{center}
\end{figure*}

\section{Conclusions\label{sec:conclusions}}

Our research addresses a gap in the literature where no prior research on MEC server deployment has tackled long-term robustness against time-varying workloads and server capacities. We have proposed an original formulation of this problem as a stochastic bilevel-optimization problem, capturing workload and capacity uncertainties as random variables and optimizing two objectives, namely computing efficiency and communication efficiency. We propose a polynomial-time approximate algorithm that leverages submodular properties to sandwich the bilevel objective, achieving bounded approximality despite the problem's NP-(strongly) hard nature. Our evaluation study using real-world mobile traffic data has demonstrated that our approach outperforms intuitive methods by an average of 19\%, with improvements up to 55\%, and supports flexible prioritization of computing and communication efficiencies. Future work will explore extending the model to incorporate dynamic server scaling and adaptive workload prediction, further enhancing its real-world applicability.



\bibliographystyle{IEEEtran}

 \bibliography{../bibfiles/tran,../bibfiles/misc,references}

\begin{thebibliography}{10}
\providecommand{\url}[1]{#1}
\csname url@samestyle\endcsname
\providecommand{\newblock}{\relax}
\providecommand{\bibinfo}[2]{#2}
\providecommand{\BIBentrySTDinterwordspacing}{\spaceskip=0pt\relax}
\providecommand{\BIBentryALTinterwordstretchfactor}{4}
\providecommand{\BIBentryALTinterwordspacing}{\spaceskip=\fontdimen2\font plus
\BIBentryALTinterwordstretchfactor\fontdimen3\font minus
  \fontdimen4\font\relax}
\providecommand{\BIBforeignlanguage}[2]{{%
\expandafter\ifx\csname l@#1\endcsname\relax
\typeout{** WARNING: IEEEtran.bst: No hyphenation pattern has been}%
\typeout{** loaded for the language `#1'. Using the pattern for}%
\typeout{** the default language instead.}%
\else
\language=\csname l@#1\endcsname
\fi
#2}}
\providecommand{\BIBdecl}{\relax}
\BIBdecl

\bibitem{8030322}
N.~Abbas, Y.~Zhang, A.~Taherkordi, and T.~Skeie, ``Mobile edge computing: A
  survey,'' \emph{IEEE Internet of Things Journal}, vol.~5, no.~1, pp.
  450--465, Feb 2018.

\bibitem{Liu2009}
B.~Liu, \emph{Facility Location Problem}.\hskip 1em plus 0.5em minus
  0.4em\relax Berlin, Heidelberg: Springer Berlin Heidelberg, 2009, pp.
  157--165.

\bibitem{Kellerer2004}
H.~Kellerer, U.~Pferschy, and D.~Pisinger, \emph{Multiple Knapsack
  Problems}.\hskip 1em plus 0.5em minus 0.4em\relax Berlin, Heidelberg:
  Springer Berlin Heidelberg, 2004, pp. 285--316.

\bibitem{7942105}
A.~Sinha, P.~Malo, and K.~Deb, ``A review on bilevel optimization: From
  classical to evolutionary approaches and applications,'' \emph{IEEE
  Transactions on Evolutionary Computation}, vol.~22, no.~2, pp. 276--295,
  2018.

\bibitem{doi:10.1137/0913069}
\BIBentryALTinterwordspacing
P.~Hansen, B.~Jaumard, and G.~Savard, ``New branch-and-bound rules for linear
  bilevel programming,'' \emph{SIAM Journal on Scientific and Statistical
  Computing}, vol.~13, no.~5, pp. 1194--1217, 1992. [Online]. Available:
  \url{https://doi.org/10.1137/0913069}
\BIBentrySTDinterwordspacing

\bibitem{DBLP:journals/ior/IbrahimW11}
R.~Ibrahim and W.~Whitt, ``Wait-time predictors for customer service systems
  with time-varying demand and capacity,'' \emph{Operations Research}, vol.~59,
  no.~5, pp. 1106--1118, 2011.

\bibitem{Kafetzakis:2011:ESD:1988079.1988274}
\BIBentryALTinterwordspacing
E.~Kafetzakis, K.~Kontovasilis, and I.~Stavrakakis, ``Effective-capacity-based
  stochastic delay guarantees for systems with time-varying servers, with an
  application to ieee 802.11 wlans,'' \emph{Perform. Eval.}, vol.~68, no.~7,
  pp. 614--628, Jul. 2011. [Online]. Available:
  \url{http://dx.doi.org/10.1016/j.peva.2011.03.010}
\BIBentrySTDinterwordspacing

\bibitem{7362036}
Z.~Xu, W.~Liang, W.~Xu, M.~Jia, and S.~Guo, ``Efficient algorithms for
  capacitated cloudlet placements,'' \emph{IEEE Transactions on Parallel and
  Distributed Systems}, vol.~27, no.~10, pp. 2866--2880, Oct 2016.

\bibitem{TranICCCN19}
Q.~Vo and D.~A. Tran, ``Probabilistic partitioning for edge server assignment
  with time-varying workload,'' in \emph{Proceedings of the IEEE International
  Conference on Computer Communication and Networks (ICCCN 2019)}, Barcelona,
  Spain, August 2019.

\bibitem{Modeling15-Chen}
X.~Chen, Y.~Jin, S.~Qiang, W.~Hu, and K.~Jiang, ``Analyzing and modeling
  spatio-temporal dependence of cellular traffic at city scale,'' in
  \emph{Communications (ICC), 2015 IEEE International Conference on}, 2015.

\bibitem{8071527}
W.~Wang, Y.~Zhao, M.~Tornatore, A.~Gupta, J.~Zhang, and B.~Mukherjee, ``Virtual
  machine placement and workload assignment for mobile edge computing,'' in
  \emph{2017 IEEE 6th International Conference on Cloud Networking (CloudNet)},
  Sept 2017, pp. 1--6.

\bibitem{DBLP:journals/iotj/FanA18}
Q.~Fan and N.~Ansari, ``Application aware workload allocation for edge
  computing-based iot,'' \emph{{IEEE} Internet of Things Journal}, vol.~5,
  no.~3, pp. 2146--2153, 2018.

\bibitem{TranIJPEDS18}
D.~A. Tran and Q.~Vo, ``A geo-aware server assignment problem for mobile edge
  computing,'' \emph{International Journal of Parallel, Emergent and
  Distributed Systems}, pp. 1--16, 2018.

\bibitem{Wang:2017:10.1109/ACCESS.2017.2665971}
\BIBentryALTinterwordspacing
S.~Wang, M.~Zafer, and K.~Leung, ``Online placement of multi-component
  applications in edge computing environments,'' \emph{IEEE ACCESS}, vol.~5,
  pp. 2514--2533, 2017. [Online]. Available:
  \url{http://dx.doi.org/10.1109/ACCESS.2017.2665971}
\BIBentrySTDinterwordspacing

\bibitem{LAHDERANTA2021130}
\BIBentryALTinterwordspacing
T.~Lähderanta, T.~Leppänen, L.~Ruha, L.~Lovén, E.~Harjula, M.~Ylianttila,
  J.~Riekki, and M.~J. Sillanpää, ``Edge computing server placement with
  capacitated location allocation,'' \emph{Journal of Parallel and Distributed
  Computing}, vol. 153, pp. 130--149, 2021. [Online]. Available:
  \url{https://www.sciencedirect.com/science/article/pii/S0743731521000605}
\BIBentrySTDinterwordspacing

\bibitem{Ceselli:2017:MEC:3148626.3148659}
\BIBentryALTinterwordspacing
A.~Ceselli, M.~Premoli, and S.~Secci, ``Mobile edge cloud network design
  optimization,'' \emph{IEEE/ACM Trans. Netw.}, vol.~25, no.~3, pp. 1818--1831,
  Jun. 2017. [Online]. Available:
  \url{https://doi.org/10.1109/TNET.2017.2652850}
\BIBentrySTDinterwordspacing

\bibitem{JAS-2018-0416}
\BIBentryALTinterwordspacing
Q.~Fan and N.~Ansari, ``On cost aware cloudlet placement for mobile edge
  computing,'' \emph{IEEE/CAA Journal of Automatica Sinica}, vol.~6, no.
  JAS-2018-0416, p. 926, 2019. [Online]. Available:
  \url{http://www.ieee-jas.org//article/id/f5b4d360-9d89-4c28-bcf3-06b3f138f524}
\BIBentrySTDinterwordspacing

\bibitem{10.1007/s11227-023-05692-4}
\BIBentryALTinterwordspacing
A.~Asghari, H.~Azgomi, A.~A. Zoraghchian, and A.~Barzegarinezhad,
  ``Energy-aware server placement in mobile edge computing using trees social
  relations optimization algorithm,'' \emph{J. Supercomput.}, vol.~80, no.~5,
  p. 6382–6410, Oct. 2023. [Online]. Available:
  \url{https://doi.org/10.1007/s11227-023-05692-4}
\BIBentrySTDinterwordspacing

\bibitem{10.1109/TMC.2021.3136868}
\BIBentryALTinterwordspacing
Y.~Qu, L.~Wang, H.~Dai, W.~Wang, C.~Dong, F.~Wu, and S.~Guo, ``Server placement
  for edge computing: A robust submodular maximization approach,'' \emph{IEEE
  Transactions on Mobile Computing}, vol.~22, no.~6, p. 3634–3649, Jun. 2023.
  [Online]. Available: \url{https://doi.org/10.1109/TMC.2021.3136868}
\BIBentrySTDinterwordspacing

\bibitem{7134728}
M.~Jia, J.~Cao, and W.~Liang, ``Optimal cloudlet placement and user to cloudlet
  allocation in wireless metropolitan area networks,'' \emph{IEEE Transactions
  on Cloud Computing}, vol.~5, no.~4, pp. 725--737, 2017.

\bibitem{TranTransOnTETC21}
D.~A. Tran, T.~T. Do, and T.~Zhang, ``A stochastic geo-partitioning problem for
  mobile edge computing,'' \emph{IEEE Transactions on Emerging Topics in
  Computing}, vol.~9, no.~4, pp. 2189--2200, 2021.

\bibitem{BECK2023401}
\BIBentryALTinterwordspacing
Y.~Beck, I.~Ljubić, and M.~Schmidt, ``A survey on bilevel optimization under
  uncertainty,'' \emph{European Journal of Operational Research}, vol. 311,
  no.~2, pp. 401--426, 2023. [Online]. Available:
  \url{https://www.sciencedirect.com/science/article/pii/S0377221723000073}
\BIBentrySTDinterwordspacing

\bibitem{10.1007/BF01588971}
\BIBentryALTinterwordspacing
G.~L. Nemhauser, L.~A. Wolsey, and M.~L. Fisher, ``An analysis of
  approximations for maximizing submodular set functions--i,'' \emph{Math.
  Program.}, vol.~14, no.~1, p. 265–294, Dec. 1978. [Online]. Available:
  \url{https://doi.org/10.1007/BF01588971}
\BIBentrySTDinterwordspacing

\bibitem{10.1145/258533.258641}
\BIBentryALTinterwordspacing
R.~Raz and S.~Safra, ``A sub-constant error-probability low-degree test, and a
  sub-constant error-probability pcp characterization of np,'' in
  \emph{Proceedings of the Twenty-Ninth Annual ACM Symposium on Theory of
  Computing}, ser. STOC '97, New York, NY, USA, 1997, p. 475–484. [Online].
  Available: \url{https://doi.org/10.1145/258533.258641}
\BIBentrySTDinterwordspacing

\bibitem{GAO202023}
\BIBentryALTinterwordspacing
C.~Gao, S.~Gu, R.~Yang, J.~Yu, W.~Wu, and D.~Xu, ``Interaction-aware influence
  maximization and iterated sandwich method,'' \emph{Theoretical Computer
  Science}, vol. 821, pp. 23--33, 2020. [Online]. Available:
  \url{https://www.sciencedirect.com/science/article/pii/S0304397520301675}
\BIBentrySTDinterwordspacing

\bibitem{10.2307/3213876}
T.~Aven, ``Upper (lower) bounds on the mean of the maximum (minimum) of a
  number of random variables,'' \emph{Journal of Applied Probability}, vol.~22,
  pp. 723--728, 1985.

\end{thebibliography}

\end{document}